\documentclass[12pt]{article}

\usepackage{url}
\usepackage{bbm}
\usepackage{mathtools}
\usepackage{amssymb}
\usepackage{amsthm}
\usepackage{empheq}
\usepackage{latexsym}
\usepackage{enumitem}
\usepackage{eurosym}
\usepackage{dsfont}
\usepackage{appendix}
\usepackage{color} 
\usepackage[unicode]{hyperref}
\usepackage{frcursive}
\usepackage[utf8]{inputenc}
\usepackage[T1]{fontenc}
\usepackage{geometry}
\usepackage{multirow}
\usepackage{lmodern}
\usepackage{anyfontsize}
\usepackage{pgfplots}
\usepackage{stmaryrd}
\usepackage{float}
\usepackage{bookmark}

\graphicspath{ {images/} }

\pgfplotsset{compat=newest}

\definecolor{red}{rgb}{0.7,0.15,0.15}
\definecolor{green}{rgb}{0,0.5,0}
\definecolor{blue}{rgb}{0,0,0.7}
\hypersetup{colorlinks, linkcolor={red},citecolor={green}, urlcolor={blue}}
      
\makeatletter \@addtoreset{equation}{section}

\newtheorem{problem}{Problem}
\newtheorem{theorem}{Theorem}
\newtheorem{theorem2}{Theorem}[section]

\newtheorem{example}[theorem2]{Example}

\newtheorem{lemma}[theorem2]{Lemma}
\newtheorem{proposition}[theorem2]{Proposition}

\newtheorem{definition}[theorem2]{Definition}
\newtheorem{remark}[theorem2]{Remark}


\geometry{hmargin=1.72cm,vmargin=2.4cm}
\DeclareUnicodeCharacter{014D}{\=o}
\setcounter{secnumdepth}{4}

\usepackage{setspace}
\setlength{\parindent}{0pt}


\allowdisplaybreaks

\usepackage{natbib}
\bibliographystyle{abbrvnat}
\setcitestyle{authoryear,open={(},close={)}} 

\title{Optimal trading without optimal control\footnote{Bastien Baldacci gratefully acknowledge the financial support of the ERC Grant 679836 Staqamof and would like to thank Iuliia Manziuk (Ecole Polytechnique) for fruitful discussions.}}
\author{Bastien {\sc Baldacci}\footnote{\'Ecole Polytechnique, CMAP, 91128, Palaiseau, France, bastien.baldacci@polytechnique.edu.} \and Jerome {\sc Benveniste}\footnote{Courant Institute of Mathematical Science, New York University, 251 Mercer St., New York, NY 10012 } \and Gordon {\sc Ritter}\footnote{Courant Institute of Mathematical Science, New York University, 251 Mercer St., New York, NY 10012 (corresponding author), ritter@post.harvard.edu } }

\begin{document}

\maketitle

\begin{abstract}
A hypothetical risk-neutral agent who trades to maximize the expected profit of the next trade will approximately exhibit long-term optimal behavior as long as this agent uses the vector $p = \nabla V(t,x)$ as effective microstructure alphas, where $V$ is the Bellman value function for a smooth relaxation of the problem. Effective microstructure alphas are the steepest-ascent direction of $V$, equal to the generalized momenta in a dual Hamiltonian formulation. This simple heuristics has wide-ranging practical implications; indeed, most utility-maximization problems that require implementation via discrete limit-order-book markets can be treated by our method. 
\end{abstract}

\section{Introduction}

Consider an investor whose preferences  are described by a utility function of wealth, $u(w)$, as per \cite{arrow1963} and \cite{pratt1964risk}.
Let $w_T$ denote the investor's wealth at some known final time $T$. 
The investor attempts to maximize the expectation of utility of final wealth, $\mathbb{E}[u(w_T)]$, by trading financial assets. 
The mechanism by which buyers meet sellers and trades occur is known as the market microstructure. In this work, the microstructure is assumed to be a continuous double auction electronic order book with time priority, although our methods could be generalized to include other kinds of market microstructure. In continuous limit-order-book microstructure, 
trades are effected by submitting limit orders to an exchange's matching engine. For each security being traded, the investor must determine the price levels at which to submit buy and sell orders and the associated share quantities attached to those orders. In real markets, the price levels are discrete; the minimum possible price increment is the quote resolution allowed by the exchange, known as the tick size. Alternatively, the investor may decide to refrain from placing any orders or cancel some existing orders. Other decision variables include order type and venue. Considering all of these details, we see that the instantaneous action space is an inconveniently large discrete space; we discuss ways of simplifying it later on. 
\begin{problem}\label{full-problem}
The investor seeks the optimal dynamic strategy for choosing an action \mbox{$a_t \in \mathcal{A}_t$} 
at each time $t$, where $\mathcal{A}_t$ is the set of possible actions at time $t$, optimal in the sense of maximizing the expected 
utility of final wealth, $\mathbb{E}[u(w_T)]$.
\end{problem}
Problem \ref{full-problem} 
is mathematically deep and perhaps intractable; it is essentially a 
stochastic optimal control problem over high-dimensional discrete 
action and state spaces. According to \cite{cont2017optimal},  
``Although simultaneous optimization of order timing, type, and routing decisions is an 
interesting problem, it also appears to be intractable\ldots'' and even this is a special case of \mbox{Problem \ref{full-problem}}. \medskip

The purpose of the present paper is to give 
practically implementable methods which execution desks could start using right away to solve Problem \ref{full-problem} approximately. Our approximation method breaks the problem up into two parts. The first part is to construct a \emph{smooth relaxation} of the problem, which
is essentially the continuous-time and continuous-space limit; the second part is to adjust our microstructure decisions to track the smooth relaxation optimally.  The key feature of all smooth relaxations is 
that they hide microstructure details behind smooth cost functions meant to 
represent the average cost of trading at a given rate; they provide 
no guidance on microstructure-level decisions, effectively assuming all executions use market orders. In particular, if we can predict the probability of a passive fill at any given instant (e.g., based on order book imbalance), it is not clear how to use this information in the context of a smooth relaxation, whereas our model provides a very clear and obvious way for the implementor to take advantage of predictions of passive fill completion. \medskip

Let $V(t,x)$ denote the Bellman value function for the smooth relaxation, defined to be the remaining expected-utility gain from time 
$t$ obtained from following the best policy when the current state at time $t$ is $x$. There is also, in principle, a value function for Problem \ref{full-problem} defined in the same way, but the latter
appears to be intractable. The present paper's key idea is to exploit the value function of the smooth relaxation to provide effective microstructure alphas that adjust the microstructure decisions toward long-term optimality. The vector 
\begin{equation} \label{eq:p-nabla-1}
	p := \nabla V(t,x) \in \mathbb{R}^d 
\end{equation}
plays a central role in our approach, where $d\geq 1$ is the number of traded assets. As defined, $p$ is the
direction of steepest ascent for the value function. In our heuristics,  
\eqref{eq:p-nabla-1} encodes all of the information about the long-term 
utility function is needed to make the microstructure decision, so it provides the key link between the trading schedule and the order routing problem. \medskip

In order to describe our policy for selecting the best 
microstructure action, we must first introduce some more notation. 
Write $R_{i,t}(v, a)$ for the (random) profit (or loss, if negative) 
from an order of quantity $v$ on stock $i$ using 
action $a$ over a short interval $[t, t+\delta t]$. 
The action $a$ includes the trader's choice of whether to trade
passively or aggressively. A trader's expected profit 
$\mathbb{E}[R_{i,t}(v, a)]$ depends on 
the trading cost associated with the pair $(v, a)$, and also on 
the trader's views concerning the short-horizon midpoint price return 
\[ 
	r_i^{\text{mid}} = \text{mid}_i(t + \delta t) / \text{mid}_i(t) - 1. 
\]
With no subscript, $r^{\text{mid}}$ denotes the $d$-dimensional vector 
of all midpoint returns for all assets. 

\begin{definition}\label{def:microstructure-alpha}
The term \textbf{effective microstructure alphas}, as used in this paper, will denote 
a set of parameters given to a microstructure trader in the place of 
$\mathbb{E}[r^{\text{mid}}]$, for the purpose of satisfying either 
a short-term goal or a long-term goal. 
\end{definition}

Microstructure alphas, as defined, could be a simple prediction of $r^{\text{mid}}$, or, more interestingly, they could be purposely skewed to encourage trading to increase expected utility (ie. increase long-term alpha and reduce risk) as we shall suggest in Equation \eqref{eq:Er-equals-p}. Our heuristics is, at time $t$, for each security indexed by 
$i\in \{1, \dots, d\}$, choose an instantaneous action $a_i^\star$
which solves the following maximization (over the finite set $\mathcal{A}_{t,i}$ of possible actions on the asset $i$ at time $t$):  
\begin{align} 
	{a_i}^\star &= \mathop{\text{argmax}}_{a \in \mathcal{A}_{t,i}} 
	\, \mathbb{E}[R_{i,t}(v_a, a)]  
	\ \ \text{ for all } \ i \in \{ 1, \dots, d \},
	\label{eq:heuristics}  
	\\ 
	\text{where: } & \mathbb{E}[r^{\text{mid}}] = p \, :=\,  \nabla V(t,x),
	\label{eq:Er-equals-p} 
\end{align}
and follow this action over the interval $[t, t + \delta t)$. 
Here $v_a$ denotes the quantity associated with action $a$; 
for example ``$a = $ aggressive buy 100 shares'' means $v_a = 100$.  
Note also that $r^{\text{mid}}$ and $p$ are $n$-vectors, so the equation 
$\mathbb{E}[r^{\text{mid}}] = p$ expresses the trader's views in all $n$ assets. \medskip

There is a very good intuitive justification for \eqref{eq:heuristics}-\eqref{eq:Er-equals-p}.  We show later that, under certain conditions, 
the optimal instantaneous trading rate at any time $t$ in the smooth relaxation is given by  
\begin{equation}\label{intuition-argmax-v}
	\mathop{\text{argmax}}_v \{ \langle p, v \rangle - c(v) \},
\end{equation}
where $c(v)$ is the average cost of trading at rate $v$. 
The expression \mbox{$\langle p, v \rangle - c(v)$} is the instantaneous analogue of expected profit minus cost, if your expected return is $p$. \medskip

The rest of the paper is organized as follows. In Section \ref{sec_AC_curve}, we present an example of a long-term trading schedule, the Almgren-Chriss case, and how to compute the value function $V(t,x)$ and its gradient $p$.  We further show how $p$ is related to the generalized momenta of the Hamiltonian approach. In Section \ref{sec_smooth_relaxation}, we derive the heuristics \eqref{eq:heuristics}  by analogy with the smooth case and show how the trader can choose its short-term alpha in order to minimize the error with respect to the trading schedule. In Section \ref{sec_general_microstructure}, we present a general microstructure trading framework on a portfolio of cross-listed assets, taking into account long and short-term trading signals as well as many components of market microstructure (spread, imbalance, probability of filling etc.). We also show how this heuristics can be applied to the problem of multi-asset market-making. Finally, Section \ref{sec_numerical_results} shows a detailed numerical example  which illustrates the dangers of separating portfolio construction from execution, and in which our method generates an improvement that is both 
statistically and economically significant.

\section{The long-term trading curve}\label{sec_AC_curve}

In this section, we recall the optimal portfolio liquidation framework of \cite{almgren2001optimal} in continuous time and show how to solve the problem using the Hamiltonian method. In the process, we introduce notation that is used in the rest of the paper. \medskip

Variations of the Almgren-Chriss model are used in execution desks all over the world. Indeed it is safe to say that the Almgren-Chriss model (usually with some custom extensions) has been used to execute trillions of dollars' worth of customer orders. We also believe that the Almgren-Chriss model, used in conjunction with the presented heuristics for microstructure decisions, is a reasonable choice if the trader has no alpha forecasts (with the possible exception of microstructure forecasts). Moreover, the computations we do here generalize in a straightforward way to extensions of the Almgren-Chriss model, such as to include long-term alpha. One of the key pieces of intuition that allows us to generalize the model, we feel, comes from Theorem \ref{thm:value-function-gradient} which makes the connection between microstructure alpha, Hamilton's generalized momentum, and the steepest-ascent direction of the long-term value function.  For all of these reasons, we feel this is a useful example to do in detail. \medskip

We consider a trader in charge of a portfolio of $d \geq 1$ assets, of initial positions $q_0 = (q_0^1,\dots,q_0^d)^{\mathbf{T}}$ where $q_0^i \in \mathbb{R}$ for all $i\in \{1,\dots,d\}$. The trader wants to unwind this portfolio over the time horizon $[0,T]$, where $T>0$. Given a control process $(v_t^i)_{t\in [0,T]}$ representing the trading rate on asset $i$, the inventory process of the $i$-th asset is given by
\begin{align}
    q_t^i = q_0^i - \int_0^t v_s^i ds, \quad i\in \{1,\dots,d\}.
\end{align}
For each stock, we consider Gaussian price dynamics: 
\begin{align*}
    dS_t^i = \sigma^i dW_t^i,
\end{align*}
where the Brownian motions $(W_t^i)_{t\in [0,T]}$ are such that $(S^1_t,\dots,S^d_t)_{t\in [0,T]}$ has a nonsingular covariance matrix $\Sigma$. \medskip 

We treat all temporary impact as instantaneous and permanent impact as linear, hence irrelevant in the continuous-time case. In the single-asset case, if we trade $\Delta q$ dollars in some small time interval of length $\Delta t$, 
and this costs $\lambda \Delta q / \Delta t$ times traded notional for some $\lambda > 0$, 
then the total cost in dollars per unit time is 
\[ 
	\lambda (\delta q / \delta t)^2 \equiv c(\delta q / \delta t), 
\]
where $c(v) = \lambda v^2$. However, $c(\cdot)$ does not need to be quadratic, merely convex. In the multi-asset case, we simply set $c : \mathbb{R}^d \to \mathbb{R}_+$. 
As an aside, we note that these impact assumptions are an approximation that is only ever assumed to be valid within a certain regime. For example, if we repeatedly aggress with medium to large order sizes within a short time-frame, it is unrealistic to assume that the impact will revert instantly. \medskip

The trader attempts to maximize the expectation of utility of final wealth. Due to the nonlinear nature of most utility functions, this is inconvenient to work with. If time is discrete and if the multi-period asset return vector follows an elliptical distribution, then there exists some constant $\kappa > 0$ for which it is equivalent to maximize the mean-variance
quadratic form: 
\[ 
	\mathbb{E}[w_T] - \frac{\kappa}{2} \mathbb{V}[w_T],
\] 
where $w_T$ is the trader's wealth at terminal time. This is essentially the Markowitz prescription. Following \cite{almgren2001optimal}, 
most authors and practitioners replace the variance of final wealth 
$\mathbb{V}[w_T]$ with the integrated instantaneous variance, leading to 
the standard continuous-time approximation of the mean-variance form, 
\[ 
	\int_0^T \Big[  - \frac{\kappa}{2} q_t^T \Sigma q_t - c(\dot q_t) \Big] \, dt. 
\]
We shall now recast the maximization of the mean-variance form of the utility function as a problem in the calculus of variations. 
As in classical mechanics, it has both a Lagrangian and a 
Hamiltonian formulation, which are convex duals to each other. The Hamiltonian is related to the Lagrangian by the Legendre-Fenchel transform. In the following, we define what we mean by the term ``smooth relaxation'' which is not standard terminology.  \medskip

\begin{definition}\label{def:smooth-relaxation}
Let $q_T \in \mathbb{R}^d$ be a desired final portfolio to be achieved at time $T$. 
The \textbf{smooth relaxation problem} associated to $c(\cdot), \Sigma, q_T$ is defined to be:
\begin{align}
V(0,q_0) & =	\min_{q \in C^2([0,T], \mathbb{R}^d) } \int_0^T L( q_s, \dot q_s) \, ds
     \text{ subject to } q_0 = q_0, q_T = q_T,
\label{vari1} 
\end{align}
where the (autonomous) Lagrangian is given by: 
\begin{equation}\label{Lagrangian}
	L(q, v) = c(v) + \frac{1}{2} \kappa (q - q_T)' \Sigma (q - q_T),
\end{equation}
and $\kappa > 0$ is the risk-aversion constant. 
\end{definition}

The terminology of Definition \ref{def:smooth-relaxation} arises because Problem \ref{full-problem} is non-smooth and perhaps 
intractable, given that the action space of Problem \ref{full-problem} is discrete 
and quite large. In this sense \eqref{vari1} is a relaxation of the intractable problem to the space of twice-differentiable paths $C^2([0,T], \mathbb{R}^d)$. 
In this paper, most of our results assume autonomous Lagrangians for simplicity. However, non-autonomous Lagrangians also arise in trading problems. For example, an alpha forecast which attenuates for large $t$ entails a time-dependent linear term in \eqref{vari1}.  \medskip

An application of the Euler-Lagrange formula to \eqref{vari1} leads to a system of second-order equations. By a standard trick, a first-order system can be obtained if we introduce  the so-called generalized momenta $p$, defined as
\begin{equation} \label{eq:def-p}
	p := \partial_v L(q, \dot q).
\end{equation}
If the conditions of the implicit function theorem are satisfied,
we could solve \eqref{eq:def-p} for $\dot q$, obtaining
\[
\dot q = \phi(q, p),
\]
for some function $\phi$ defined implicitly by \eqref{eq:def-p}.
The Euler equation then takes the form
\[
\dot p = \partial_q L(q, \dot{q}) = \partial_q L(q, \phi(q, p) ) \equiv \psi(q, p),
\]
where this defines $\psi$.  \medskip

As the functions $\phi, \psi$ are algebraic (not involving derivatives),
we have a system of $2d$ first-order ODEs given by
\begin{equation}\label{ode1}
    \dot q = \phi(q, p),  \quad
    \dot p = \psi(q, p)
\end{equation}
These equations can be expressed more symmetrically by introducing the
Hamiltonian
\[
    H(q, p) := p \phi(q, p) - L(q, \phi(q,p)).
\]
Equations~\eqref{ode1} are equivalently written in a form known as \emph{Hamilton's equations}:
\begin{equation} \label{eq:hamilton}
	\dot q = H_p(q,p), \quad \dot p = - H_q(q,p).
\end{equation}
Suppose $c(v) = \frac12 v^{\mathbf{T}} \Lambda v$ where $\Lambda = \text{diag}(\lambda^1, \dots, \lambda^d)$ is a diagonal matrix. We assume that no trading is free of cost, so $\lambda^i > 0$ for all $i$. From \eqref{Lagrangian} and \eqref{eq:def-p}, we see that the generalized momenta are $p = \Lambda \cdot q$ and hence algebraically solving, one has $\phi(q,p) = \Lambda^{-1} p$. The Hamiltonian is then 
\[ 
	H(q, p) = \frac{1}{2} p \Lambda^{-1} p - \frac{1}{2} \kappa q^{\mathbf{T}} \Sigma q  .
\]  
Hamilton's equations then become: 
\begin{equation}\label{eq:ham} 
	\dot q = \Lambda^{-1} p, \qquad \dot p = \kappa \Sigma q ,
\end{equation}
and some technical computations lead to the following proposition.
\begin{lemma}
The solution to \eqref{vari1} is given by
\begin{align}\label{sol_AC}
  q_t^{\star} =  (C^{\mathbf{T}})^{-1} \Omega \Big(e^{D^{\frac{1}{2}}(T-t)}-e^{-D^{\frac{1}{2}}(T-t)}\Big)\Big(e^{D^{\frac{1}{2}}T}-e^{-D^{\frac{1}{2}}T}\Big)^{-1} \Omega^{\mathbf{T}}C^{\mathbf{T}} q_0, 
\end{align}
where $\Sigma = C C^{\mathbf{T}}$ is the Cholesky decomposition of $C$ and $\Omega D \Omega^{\mathbf
{T}}$ is a spectral decomposition of the positive definite matrix $\frac{\kappa}{2} C^{\mathbf{T}}\Lambda^{-1}C$. 
\end{lemma}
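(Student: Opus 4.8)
The plan is to solve the two-point boundary-value problem defined by Hamilton's equations \eqref{eq:ham}, and then to argue that the path it produces is genuinely the minimizer of \eqref{vari1}. The latter point is easy: the integrand $L$ in \eqref{Lagrangian} is jointly convex in $(q,\dot q)$ since $\Lambda\succ 0$ and $\Sigma\succ 0$, so \eqref{vari1} is a convex variational problem with linear endpoint constraints, and any path satisfying the Euler--Lagrange (equivalently Hamilton) equations together with the boundary data is automatically the unique global minimizer. Here the relevant boundary data are $q(0)=q_0$ and the liquidation target $q(T)=q_T=0$, consistent with the fact that \eqref{sol_AC} vanishes at $t=T$.

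First I would eliminate the momentum. Differentiating $\dot q=\Lambda^{-1}p$ and inserting $\dot p=\kappa\Sigma q$ gives the second-order linear system $\ddot q=\kappa\,\Lambda^{-1}\Sigma\,q$. The coefficient $\kappa\Lambda^{-1}\Sigma$ is not symmetric, so it admits no orthogonal diagonalization directly; overcoming this non-normality is the heart of the argument and the reason the Cholesky and spectral factorizations appear in the statement. The device is the substitution $\tilde q:=C^{\mathbf{T}}q$ with $\Sigma=CC^{\mathbf{T}}$: using $\Sigma(C^{\mathbf{T}})^{-1}=C$, the system turns into $\ddot{\tilde q}=\kappa\,C^{\mathbf{T}}\Lambda^{-1}C\,\tilde q$, now with a symmetric positive-definite coefficient. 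Diagonalizing it through the spectral decomposition of the statement and setting $\hat q:=\Omega^{\mathbf{T}}\tilde q$ decouples the system into $d$ independent scalar equations $\ddot{\hat q}_j=d_j\hat q_j$ with $d_j>0$, modulo the scalar constant relating the eigenvalues $d_j$ to those of $\kappa C^{\mathbf{T}}\Lambda^{-1}C$, which must be tracked carefully.

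Next I would solve each decoupled equation in the basis adapted to the far endpoint, writing $\hat q_j(t)=\alpha_j e^{\sqrt{d_j}(T-t)}+\beta_j e^{-\sqrt{d_j}(T-t)}$. The constraint $\hat q(T)=0$ forces $\alpha_j=-\beta_j$, so each mode is proportional to $\sinh\!\bigl(\sqrt{d_j}(T-t)\bigr)$, and the constraint at $t=0$ fixes its amplitude as $\hat q_j(0)/\sinh\!\bigl(\sqrt{d_j}T\bigr)$. Because $D$ is diagonal and the exponential blocks commute, the collection of these scalar solutions is precisely the matrix factor $\bigl(e^{D^{1/2}(T-t)}-e^{-D^{1/2}(T-t)}\bigr)\bigl(e^{D^{1/2}T}-e^{-D^{1/2}T}\bigr)^{-1}$ applied to $\hat q(0)$.

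Finally I would unwind the two substitutions. Since $\hat q(0)=\Omega^{\mathbf{T}}\tilde q(0)=\Omega^{\mathbf{T}}C^{\mathbf{T}}q_0$ and $q=(C^{\mathbf{T}})^{-1}\Omega\,\hat q$, composing the transforms sandwiches the $\sinh$-type block between the outer factor $(C^{\mathbf{T}})^{-1}\Omega$ and the inner factor $\Omega^{\mathbf{T}}C^{\mathbf{T}}$, reproducing \eqref{sol_AC}. As checks I would confirm the endpoints directly: at $t=T$ the difference of exponentials vanishes, so $q_T^\star=0$, and at $t=0$ the bracketed factor collapses to the identity, so $q_0^\star=(C^{\mathbf{T}})^{-1}\Omega\Omega^{\mathbf{T}}C^{\mathbf{T}}q_0=q_0$. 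I expect the only real obstacle to be the simultaneous-diagonalization step of the second paragraph---keeping the Cholesky and orthogonal factors on their correct sides and pinning down the constant that turns the eigenvalues into the exponents $D^{1/2}$; everything else is linear-ODE bookkeeping.
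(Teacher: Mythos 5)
Your proposal is correct and is essentially the paper's own (unwritten) argument: the paper simply states Hamilton's equations \eqref{eq:ham} and asserts that ``some technical computations'' give \eqref{sol_AC}, and those computations are exactly your elimination of $p$, the symmetrization $\tilde q = C^{\mathbf{T}}q$, the orthogonal diagonalization, and the $\sinh$ solution of the decoupled two-point boundary-value problems; your convexity remark additionally supplies the sufficiency argument the paper never makes. The scalar constant you deferred is worth settling, though, because it exposes an inconsistency in the statement itself: under the lemma's normalization $\Omega D\Omega^{\mathbf{T}} = \frac{\kappa}{2}C^{\mathbf{T}}\Lambda^{-1}C$, equations \eqref{eq:ham} give $\ddot{\tilde q} = \kappa\, C^{\mathbf{T}}\Lambda^{-1}C\,\tilde q = 2\,\Omega D\Omega^{\mathbf{T}}\tilde q$, so the decoupled modes are $\sinh\big(\sqrt{2d_j}\,(T-t)\big)$ and the exponentials in \eqref{sol_AC} should carry $\sqrt{2}\,D^{\frac{1}{2}}$. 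Equivalently, \eqref{sol_AC} as printed is the solution when $D$ is taken from the spectral decomposition of $\kappa\, C^{\mathbf{T}}\Lambda^{-1}C$, or when the cost is $c(v)=v^{\mathbf{T}}\Lambda v$ (the convention used later in Section \ref{sec_numerical_results}) rather than $\frac{1}{2}v^{\mathbf{T}}\Lambda v$ as assumed just before the lemma; your derivation is the correct one, and you need only state explicitly which normalization you adopt.
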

The trading curve \eqref{sol_AC} can be computed in advance, and corresponds to the order scheduling decision.  We end this section by showing that the gradient of the value function \eqref{vari1} is equal to the generalized momenta $p$. \medskip

\begin{theorem} \label{thm:value-function-gradient}
Let $V:[0,T]\times\mathbb{R}^d \to \mathbb{R}$ be continuously differentiable in time and space such that:
\begin{align}
V(t,q_t) & =	- \min_{q \in C^2([t,T], \mathbb{R}^d) } \int_t^T L( q_s, v_s) \, ds
     \text{ subject to } q_0 = q_0, q_T = q_T
\label{vari2} 
\end{align}
where for all $s\in [0,T]$, $\dot q_s = v_s$ and $L( q, v)$ is separable.\footnote{The value function \eqref{vari1} satisfies these hypothesis, as the associated optimal control \eqref{sol_AC} is continuously differentiable with respect to time and the Lagrangian of the problem is quadratic with respect to both of its variables.} The function $V$ defined in \eqref{vari2} satisfies the Hamilton-Jacobi-Bellman differential equation:
\begin{equation}\label{eq:hj}
    \partial_t V(t,q) + H(q, \nabla V)=0,
\end{equation}
where $H(q,p)= \sup_v \left\{\langle p, v\rangle  -L(q, v) \right\}$,
with the singular final condition:
\[
    V(T,q) = 
    \begin{cases}
    0, \text{ if $q = q_T$}\\
    \infty, \text{ if $q \neq q_T$}.
    \end{cases}
\]
\end{theorem}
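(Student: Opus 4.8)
The plan is to derive the Hamilton-Jacobi-Bellman equation \eqref{eq:hj} from Bellman's principle of optimality (the dynamic programming principle) applied to the deterministic variational problem \eqref{vari2}, and then to pass to the infinitesimal limit. Throughout I would lean on the two standing hypotheses: that $V$ is $C^1$ in $(t,q)$ (which in the Almgren-Chriss case is certified by the explicit smooth optimizer \eqref{sol_AC}), and that $L$ is separable, so that its velocity dependence decouples from its position dependence and cleanly produces the Legendre-Fenchel transform already introduced.

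First I would establish the dynamic programming relation. Fixing a current state $q$ at time $t$ and a small step $h>0$, I would split the path integral at $t+h$ and invoke the principle of optimality: the tail of an optimal path is itself optimal for the subproblem started at $(t+h,q_{t+h})$, where $q_{t+h}=q+\int_t^{t+h}v_s\,ds$. Since the subproblem inherits the terminal constraint $q_T=q_T$, its optimal cost is exactly $-V(t+h,q_{t+h})$, and because $V=-\min\int L$ the relation becomes a supremum,
\begin{equation*}
V(t,q)=\sup_{v}\left\{\,V(t+h,q_{t+h})-\int_t^{t+h}L(q_s,v_s)\,ds\,\right\},
\end{equation*}
the supremum being over admissible controls on $[t,t+h]$ with $q_t=q$.

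Next I would pass to the infinitesimal limit. Taking a control approximately constant equal to $v$ on the short interval gives $q_{t+h}=q+hv+o(h)$, so a first-order Taylor expansion of the $C^1$ function $V$ yields $V(t+h,q_{t+h})=V(t,q)+h\,\partial_t V(t,q)+h\,\langle\nabla V(t,q),v\rangle+o(h)$, while $\int_t^{t+h}L\,ds=h\,L(q,v)+o(h)$. Substituting into the dynamic programming relation, cancelling $V(t,q)$, dividing by $h$ and letting $h\downarrow 0$ gives
\begin{equation*}
0=\partial_t V(t,q)+\sup_v\big\{\langle\nabla V(t,q),v\rangle-L(q,v)\big\}=\partial_t V(t,q)+H(q,\nabla V),
\end{equation*}
which is \eqref{eq:hj}; separability and convexity of $L$ in $v$ guarantee the supremum defining $H$ is finite and attained, and identify it with the Legendre-Fenchel transform. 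The singular terminal condition then follows from feasibility alone: on the degenerate interval $[T,T]$ the only admissible path is constant, which satisfies the endpoint constraint only when $q=q_T$ (cost $0$, so $V(T,q_T)=0$) and is infeasible otherwise, so the remaining cost is infinite and $V(T,q)$ takes its singular value for $q\neq q_T$.

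The main obstacle is making the two heuristic steps rigorous in the presence of the fixed-endpoint constraint. The dynamic programming relation is delicate because $V(t+h,q_{t+h})$ is only meaningful for states from which $q_T$ remains reachable in the time left, so one must check that the reachable set is open around the optimal trajectory and that the perturbed controls entering the supremum keep the subproblem feasible; and the exchange of the limit $h\downarrow 0$ with the supremum requires controlling the $o(h)$ remainder uniformly in $v$ over a suitable compact set of candidate velocities. Both are tractable here, since the quadratic cost forces optimal velocities to stay bounded and the explicit solution \eqref{sol_AC} supplies the needed regularity, but they are precisely the points demanding genuine care rather than routine calculation.
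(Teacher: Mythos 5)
Your proof is correct and follows the same dynamic-programming route as the paper, but your execution is more complete where it matters. The paper's proof invokes the dynamic programming principle only as an \emph{equality along the optimal trajectory} $q^\star$: it expands $V(t+h,q^\star_{t+h})$ by the fundamental theorem of calculus, obtains the integral identity $0=\int_t^{t+h}\big(\partial_t V(s,q^\star_s)+\langle\nabla V(s,q^\star_s),\dot q^\star_s\rangle-L(q^\star_s,\dot q^\star_s)\big)\,ds$, and then simply asserts that the conclusion ``follows from an application of Bellman's optimality principle'' --- that is, the step where the supremum over $v$ (and hence the Legendre--Fenchel transform $H$) actually enters is delegated to a citation. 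Your version supplies exactly that missing half: by writing the DPP with an explicit supremum over admissible controls and testing it with (near-)constant controls, you get $\partial_t V(t,q)+\langle\nabla V(t,q),v\rangle-L(q,v)\le 0$ for every fixed $v$, while equality along the optimal path gives the reverse inequality at $v=\dot q^\star_t$; together these yield $\partial_t V+H(q,\nabla V)=0$ with the supremum attained. You also verify the singular terminal condition, which the paper's proof never addresses, and you correctly identify the two technical points the paper glosses over: feasibility of the fixed-endpoint constraint after perturbing the control, and uniformity of the $o(h)$ remainder when the limit is exchanged with the supremum. One quibble: finiteness and attainment of the supremum defining $H$ come from coercivity (superlinear growth) of $L$ in $v$ --- e.g.\ the quadratic cost, or the degree-$r>1$ coercivity the paper assumes in Section \ref{sec_smooth_relaxation} --- not from separability alone, so you should cite that hypothesis rather than separability at that point.
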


\begin{proof} Let $q^\star$ be the path that solves \eqref{vari2} on $[t,T]$ with initial condition $q^\star_t =q$. By the dynamic programming principle, we have for $h>0$
\begin{align}
\label{eq:hj0}
V(t,q) &= -\int_t ^{t+h} L(q^\star_s, \dot q^\star_s) ds + V(t+h, q^\star_{t+h}).
\end{align}
As $ V(t+h, q^\star_{t+h}) = V(t,q)+  \int_t^{t+h}\big(\partial_t V(s,q_s) + \langle \nabla V(s,q_s) \cdot \dot q_s^\star \rangle \big) ds$, Equation \eqref{eq:hj0} can be rewritten as 
\begin{align*}
    0 = \int_t ^{t+h}  \big( \partial_t V(s,q_s) + \langle \nabla V(s,q_s) \cdot \dot q_s^\star - L(q^\star_s, \dot q^\star_s) \big) ds.
\end{align*}
The conclusion follows from an application of the Bellman's optimality principle, see \cite{dreyfus1960dynamic}, which gives the desired Hamilton-Jacobi-Bellman equation. 
\end{proof}

The above theorem gives the desired interpretation of the generalized momenta $p$ in terms of the value function. Indeed, along an optimal trajectory $q^\star$, we have
\[
    \left. \nabla V \right|_{x = q^\star_t} = \partial_v L(q^\star_t, \dot q^\star_t) = p_t 
    \ \forall \ t \in [0,T],
\]
where the second equality is just the definition of $p$ from \eqref{eq:def-p}.

\section{From smooth relaxation to microstructure decision}
\label{sec_smooth_relaxation}

In this section, we prove our main theorem, Theorem \ref{thm_main1}, which shows how a risk-neutral instantaneous-profit maximizer (or ``myopic agent'') can achieve long-term optimality given a judicious choice of microstructure alpha model. In other words, there is a specific microstructure alpha model related to Hamilton's generalized momenta, which, if used by a microstructure trader, encourages the trader to take positions that are optimal at a much longer horizon. \medskip

Suppose now that $L$ is coercive of degree $r > 1$. One may prove
that $H$ coincides with the Fenchel conjugate of $L$:
\begin{equation}\label{eq:ham-fenchel}
	H(q, p) = \sup_{v \in \mathbb{R}^d}  \left\{ \langle p,v \rangle -  L(q,v) \right\}.
\end{equation}
It follows that $H$ is convex in the $p$ variable.
We now restrict attention to autonomous and separable Lagrangians that take the form
\begin{equation} \label{eq:separation}
	L(q,v) = c(v) + f(q),
\end{equation}
which includes the mean-variance example discussed before. Under the assumption \eqref{eq:separation}, duality between the Lagrangian and Hamiltonian implies that the optimal instantaneous trade $\dot q^*(t)$ at each time $t$ is the argument $v$ which solves the maximization problem in \eqref{eq:ham-fenchel}. That is, 
\begin{equation}\label{eq:argmax}
	{\dot q}^\star = \mathop{\text{argmax}}_v \big\{ \langle p, v \rangle - c(v) \big\} .
\end{equation}
where \eqref{eq:separation} implies we can safely omit the term $f(q)$. \medskip

One can interpret \eqref{eq:argmax} as optimization in a risk-neutral world. Indeed, suppose a risk-neutral agent had a vector of expected returns, which happened to exactly equal the vector of generalized momenta, $p$, and sought only to maximize net profit, irrespective of risk. In that case, \eqref{eq:argmax} is the problem faced by this agent. This agent can be considered ``myopic'' because 
any information concerning more than one period ahead is available 
to the agent only indirectly, insofar as $p$ depends on the rest of the trading path. 
This intuition is related to the interpretation of the 
generalized momenta as the gradient of the value function.  
Indeed, for a myopic risk-neutral investor who does not face transaction costs, the ``value function'' of a position of size $x$ is simply the expected profit over the next period, i.e., $V(t,q) = r \cdot q$ where $r$ is the vector of expected returns, but then $\nabla V = r$. The following definition characterizes what a myopic agent is. Moreover, we emphasize that, throughout the paper, $V(\cdot,\cdot)$ corresponds to the value function of a long-term optimization problem (whose gradient provide the effective microstructure alphas), whereas $W(\cdot,\cdot)$ denotes the value function of the myopic agent.

\begin{definition}
A \textbf{myopic agent} with microstructure alphas $p$ 
is defined to be a risk-neutral trader seeking to maximize instantaneous net profit by choosing trading rate given by 
\begin{align}\label{def_pb_myopic_agent}
    {\dot q}^\star = \mathop{\text{argmax}}_v \big\{ \langle p, v \rangle - c(v) \big\},
\end{align}
where $p$ denotes a set of microstructure alphas, as in Definition \ref{def:microstructure-alpha}. The \textbf{value function of the myopic trader} at time $t\in [0,T]$ for an inventory $q\in \mathbb{R}^d$ is defined as
\begin{align}\label{def_value_function_myopic_agent}
    W(t,q) = \int_t^T \big(\langle p_s, v^\star_s \rangle - c(v^\star_s)\big) ds ,
\end{align}
where $v^\star$ is defined as the solution of \eqref{def_pb_myopic_agent}. 
\end{definition}
In other words, the value function of a myopic trader defined in \eqref{def_value_function_myopic_agent} is simply the sum of his instantaneous trading gains over time. The following proposition shows that a myopic trader sending market orders only has to choose $p=\nabla V$ in order to minimize the error between his value function and the long-term objective function $V$. 

\begin{theorem}\label{thm_main1}
Assume that $L$ takes the separable form \eqref{eq:separation}. 
A myopic agent with instantaneous cost function $c(\cdot)$ 
must choose microstructure alphas $p=\nabla V$ in order to minimize the absolute error between his value function and the long-term objective function $V$ defining the trading schedule. More precisely, for all $(t,q)\in[0,T]\times\prod_{i=1}^d\big[\min(q_0^i,q_T^i),\max(q_0^i,q_T^i)\big]$,
\begin{align*}
    \big|W(t,q)-V(t,q)\big| \leq \kappa(T-t)\frac{|q-q_T|^{\mathbf{T}}\Sigma|q-q_T|}{2},
\end{align*}
and we have the uniform bound 
\begin{align*}
    \sup_{(t,q)\in [0,T]\times\prod_{i=1}^d [\min(q_0^i,q_T^i),\max(q_0^i,q_T^i)]} \big|W(t,q)-V(t,q)\big| \leq \kappa T \frac{|q_0-q_T|^{\mathbf{T}}\Sigma|q_0-q_T|}{2},
\end{align*}
where $|q_0-q_T| = \big( |q_0^1-q_T^1|,\dots |q_0^d-q_T^d|\big)$. 
\end{theorem}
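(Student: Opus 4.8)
The plan is to exploit the fact that, once the myopic agent adopts the alphas $p=\nabla V$, its trajectory coincides with the optimal trajectory $q^\star$ of the smooth relaxation, so that both $V$ and $W$ can be written as integrals of running quantities along the \emph{same} path and compared termwise. First I would observe that the feedback law \eqref{def_pb_myopic_agent} defining the myopic rate is literally the optimal-control characterization \eqref{eq:argmax} of the smooth relaxation once $p=\nabla V$; by Theorem \ref{thm:value-function-gradient}, $\nabla V$ evaluated along $q^\star$ equals the generalized momentum $p_s=\partial_v L(q^\star_s,\dot q^\star_s)$. Hence the two feedback laws coincide, and by uniqueness of the ODE solution with the common initial datum $q^\star_t=q$ the myopic path is exactly $q^\star$. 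In particular the $v^\star_s,q^\star_s$ appearing in \eqref{def_value_function_myopic_agent} are the optimizers of \eqref{vari1}.

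Next I would subtract the two value functions along this common path. Writing $V(t,q)=\int_t^T L(q^\star_s,v^\star_s)\,ds$ for the optimal value of the relaxation \eqref{vari1} on $[t,T]$ and using \eqref{def_value_function_myopic_agent} together with the separation \eqref{eq:separation},
\[
V(t,q)-W(t,q)=\int_t^T\Big[\,2c(v^\star_s)-\langle p_s,v^\star_s\rangle+f(q^\star_s)\,\Big]\,ds.
\]
The first-order condition of \eqref{def_pb_myopic_agent} gives $p_s=c'(v^\star_s)$, which for the quadratic cost $c(v)=\tfrac12 v^{\mathbf{T}}\Lambda v$ of the Almgren--Chriss example yields the Euler identity $\langle p_s,v^\star_s\rangle=\langle c'(v^\star_s),v^\star_s\rangle=2c(v^\star_s)$. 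The transaction-cost terms therefore cancel, leaving the clean identity
\[
V(t,q)-W(t,q)=\int_t^T f(q^\star_s)\,ds=\int_t^T\frac{\kappa}{2}\,(q^\star_s-q_T)^{\mathbf{T}}\Sigma(q^\star_s-q_T)\,ds\ \ge\ 0,
\]
so that $|W-V|=V-W$: the myopic value misses precisely the accumulated mean--variance penalty.

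It then remains to bound this nonnegative integral. For $(t,q)$ in the stated box, the explicit Almgren--Chriss solution \eqref{sol_AC} shows that each coordinate of $q^\star_s$ travels monotonically from $q$ toward $q_T$, so it stays in the box and $|q^\star_s-q_T|\le|q-q_T|$ componentwise. Bounding the quadratic form $(q^\star_s-q_T)^{\mathbf{T}}\Sigma(q^\star_s-q_T)$ by $|q-q_T|^{\mathbf{T}}\Sigma|q-q_T|$ and integrating over an interval of length $T-t$ gives the pointwise estimate. The uniform bound follows by taking the supremum: $T-t\le T$, and on the box $|q-q_T|\le|q_0-q_T|$ componentwise, so each factor is dominated by its value at $(0,q_0)$.

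The main obstacle is this last step, namely controlling the running penalty $f(q^\star_s)$ by its endpoint value. Two points need care: establishing coordinatewise monotonicity (equivalently, box-invariance) of $q^\star$ from \eqref{sol_AC}, which is transparent in the spectral coordinates $\Omega^{\mathbf{T}}C^{\mathbf{T}}q$ but must be transferred back to the original coordinates; and passing from $(q^\star_s-q_T)^{\mathbf{T}}\Sigma(q^\star_s-q_T)$ to the absolute-value form $|q-q_T|^{\mathbf{T}}\Sigma|q-q_T|$, which is exactly why the bound is stated with componentwise absolute values rather than with $\Sigma$ applied to signed increments. I would also flag that the cancellation of the cost terms is exact only for quadratic $c$ (degree $r=2$); for a cost homogeneous of degree $r$ one instead retains $(2-r)c(v^\star_s)$ inside the integral, so the stated bound is tied to the Almgren--Chriss quadratic regime.
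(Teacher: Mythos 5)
Your core computation is the paper's own proof in slightly different clothing. The paper also specializes to quadratic costs ($c(v)=\frac{\eta}{2}\|v\|_2^2$), uses the first-order condition to write $W(t,q)=\int_t^T \frac{\|p_s\|_2^2}{2\eta}\,ds$, writes $V$ along the Almgren--Chriss path as the integral of $\frac{1}{2\eta}\|\partial_v L(q_s^\star,v_s^\star)\|_2^2$ plus the risk penalty, and observes that the choice $p_s=\partial_v L(q_s^\star,v_s^\star)=\nabla V(s,q_s^\star)$ cancels the cost terms, leaving exactly your identity $V-W=\int_t^T \frac{\kappa}{2}(q_s^\star-q_T)^{\mathbf{T}}\Sigma(q_s^\star-q_T)\,ds$; the final bounds are then asserted to follow ``by definition of the space of inventories.'' Your Euler-identity route to the cancellation, the ODE-uniqueness argument identifying the myopic path with $q^\star$, and your remark that the cancellation is specific to degree-$2$ costs are minor refinements of the same approach. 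The obstacles you flag at the end are genuine (coordinatewise monotonicity of \eqref{sol_AC} is transparent only in spectral coordinates, and $x^{\mathbf{T}}\Sigma x\leq |x|^{\mathbf{T}}\Sigma |x|$ can fail when $\Sigma$ has negative off-diagonal entries and $x$ has mixed signs), but the paper glosses over exactly the same points, so you are not behind it there.

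The one genuine omission is the optimality claim itself. Theorem \ref{thm_main1} asserts that the agent \emph{must choose} $p=\nabla V$ \emph{in order to minimize} the error; you only evaluate the error for that particular choice and bound it, which establishes the displayed inequalities but not that no other admissible $p$ does better. The paper closes this (informally) as follows: in
$|V-W|=\Big|\int_t^T\big\{\tfrac{1}{2\eta}\|\partial_v L(q_s^\star,v_s^\star)\|_2^2+\tfrac{\kappa}{2}(q_s^\star-q_T)^{\mathbf{T}}\Sigma(q_s^\star-q_T)-\tfrac{\|p_s\|_2^2}{2\eta}\big\}\,ds\Big|$,
the alpha $p$ is constrained to be a functional of the instantaneous trading rate only (because $L$ is separable), so the term $\frac{\|p_s\|_2^2}{2\eta}$ can at best cancel the cost term and never the inventory penalty; hence the minimum over admissible $p$ is attained at $p_s=\partial_v L(q_s^\star,v_s^\star)$, and the residual error is exactly your integral of the risk penalty. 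To prove the theorem as stated you need to add some argument of this kind that ranges over all admissible $p$, rather than starting from $p=\nabla V$ as given.
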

\begin{proof}
The myopic trader aims at minimizing $c(v)-p v$, where for the moment $p$ remains undetermined, at each trading time. Over $[t,T]$ the trader's problem can be written:
\begin{align*}
    W(t,q) = \max_v \int_t^T [\langle p_s, v_s\rangle - c(v_s)]\,  ds. 
\end{align*}
We choose quadratic costs $c(v)= \frac{\eta}{2}\|v\|_2^2$, where $\|\cdot \|_2$ is the Euclidian norm and the first order condition with respect to $v$ gives $W(t,q) = \int_t^T \frac{\|p_s\|_2^2}{2\eta} \, ds$. On the other hand, the value function $V(t,q)$ becomes 
\begin{align*}
    \int_t^T \left\{ \frac{1}{2\eta}\|\partial_v L(q^\star_s,v^\star_s)\|_2^2 + \frac{1}{2}\kappa (q^\star_s-q_T)^{\mathbf{T}} \Sigma (q^\star_s-q_T)
    \right\}
    \, ds,
\end{align*}
where $q^\star$ is defined by \eqref{sol_AC} and $v^\star$ is its derivative with respect to time. Therefore, 
\begin{align*}
    | V(t,q)- W(t,q)| = \Big|\int_t^T
    \Big\{
    \frac{1}{2\eta} \|\partial_v L(q^\star_s,v^\star_s)\|_2^2 + \frac{1}{2}\kappa (q^\star_s-q_T)^{\mathbf{T}} \Sigma (q^\star_s - q_T) - \frac{\|p_s\|_2^2}{2\eta} 
    \Big\}
    \, ds \Big|,
\end{align*}
and the minimum with respect to $p$ is attained at $p_s =  \partial_v L(q^\star_s,v^\star_s) = \nabla V(s,q^\star_s)$, because $p$ must depend only of the instantaneous trading rate and $L$ is separable and additive. The bounds are obtained easily by definition of the space of inventories. 
\end{proof}

This simple result has several important consequences. Suppose one wants to avoid the use of optimal control and still wants to follow the Almgren-Chriss trading curve. In that case, one can simply solve the static optimization problem \eqref{eq:argmax} at discrete times (the times of trading), using $p=\nabla V$. Equation \eqref{eq:argmax} does not give a full set of instructions for the trader with a long-term trading schedule who has to interact with a continuous limit order book market, but it can serve as a guide. Indeed, the order routing problem, treated notably in \cite{cont2017optimal}, takes into account the possibility to send limit, market, or cancel orders to several liquidity venues, depending on their spread and imbalance. Stochastic control appears to be inefficient for this problem, as one needs to solve a high-dimensional Hamilton-Jacobi-Bellman equation. Methods involving deep reinforcement learning have been developed for optimal trading, see for example \cite{baldacci2020adaptive}, but they lead to high computation time, especially if one wants to deal with a portfolio of assets traded on several venues. The advantage of the methodology presented in this paper is that one can avoid optimal control and solve a simple static optimization problem to determine the optimal action at each discrete trading time.     \medskip

\begin{remark}
The bounds on the absolute error between the value function of the myopic agent and the long-term objective function enable to compute the accuracy of the myopic trader. For example, take the liquidation over $T=1$ day of $q_0^1=q_0^2=2000$ shares of $2$ assets, with correlation $\rho=0.6$ and daily volatilities $\sigma^1=0.015$ and $\sigma^2 = 0.02$. The absolute error between the two value functions is uniformy bounded in time and inventories by $2\times 10^{-2}$. 
\end{remark}

In Theorem \ref{thm_main1}, the myopic trader does not consider the properties of an order book, such as the possibility to submit limit and market orders or to wait. In this case, the optimal effective microstructure alpha $p$ (in the sense of minimization of the error with respect to the Almgren-Chriss value function $V$) should not be equal to the generalized momenta because we add microstructure effects for the myopic trader that are not present in the trading schedule represented by $\nabla V$. \medskip

With a sufficiently simple fill model, the myopic trader's problem dealing with microstructure effects can be solved in closed form. This closed-form expression (see Example \ref{example:myopic}) illustrates the contrast between the two possible decisions the trader must face, as mentioned above.

\begin{example}\label{example:myopic}
For the sake of readability, we assume $d=1$. Suppose that the myopic trader can choose between submitting a limit order (with fill probability $0<f<1$) or a market order (with the cost of crossing the spread equal to $s>0$). 
The myopic trader's optimization problem is:
\begin{align*}
    \sup_v \Big\{ (pv -\frac{\eta}{2}v^2 - sv)\mathbf{1}_{\{pv -\frac{\eta}{2}v^2 - sv > f (pv - \frac{\eta}{2}v^2) \}} + f (pv - \frac{\eta}{2}v^2)\mathbf{1}_{\{pv -\frac{\eta}{2}v^2 - sv < f (pv - \frac{\eta}{2}v^2) \}} \Big\}.
\end{align*}
Computations lead to the following decisions:
\begin{align*}
    \text{ passive order if } p<\frac{2s}{1-f}, \text{ aggressive otherwise},
\end{align*}
and the optimal microstructure alpha is given by: 
\begin{align*}
    p^\star_t = \frac{1}{\sqrt{f}}|\nabla V(t,q^\star_t)| \text{sgn}\big(\nabla V(t,q^\star_t)\big) \text{ if } p_t^\star <\frac{2s}{1-f}, \quad p^\star_t =  |\nabla V(t,q^\star_t)|\text{sgn}\big(\nabla V(t,q^\star_t)\big)+s \text{ otherwise.}
\end{align*}
\end{example}

The use of the generalized momenta as effective microstructure alpha has a wide range of practical implications. First, it offers a way to bridge the gap between order placement decisions and scheduling decisions, usually decoupled in practice. Second, the microstructure formulation helps to tackle classic optimal control on limit order books. For example, a realistic optimal trading framework dealing with a portfolio of assets on several liquidity venues is in practice intractable due to the dimensionality of the problem. In the method presented in this paper, the optimal controls of the trader (that is, the volume sent on each venue for each asset by the mean of limit and market orders) are derived through a simple static optimization problem, which can be solved for a large number of assets on a large number of venues. The convergence through the trading schedule is guaranteed by choice of the effective microstructure alpha $p$. 
\begin{remark}
Note that, in the framework of Example \eqref{example:myopic}, if the spread $s$ tends to zero and the filling probability $f$ tends to one, we recover the framework of a myopic sending market orders only, and the optimal effective microstructure alpha is given by $\nabla V$. 
\end{remark}
This method can easily handle the increasing complexity coming from the microstructure effects (short term alpha, imbalance, and spread of each venue, etc.). In the next section, we present a general microstructure trading model taking into account the main stylized facts combining order placement and order routing of a portfolio of assets. We show that using the method proposed in this paper can be applied to solve in practice two important problems in systematic trading: the multi-asset, multi-venue optimal trading problem and the multi-asset, multi-venue optimal market-making problem.

\section{A general microstructure trading model with long-term trading schedule}
\label{sec_general_microstructure}

\subsection{Optimal trading}\label{subsec_opt_trading}

We first provide a definition of the problem.
\begin{definition}
Consider an agent trading a portfolio of correlated assets, where each asset is listed on one or more liquidity venues. The \textbf{multi-asset, multi-venue optimal trading problem} consists in determining at a given time and for each asset, the optimal quantity to buy or sell on each venue, for given market conditions and a pre-computed trading schedule, as well as the optimal limit at which such quantity should be posted. 
\end{definition}

The framework described here is inspired by \cite{baldacci2020adaptive}. Consider a trading schedule for $d\geq 1$ assets $q^\star \in \mathbb{R}^d$ (the Almgren-Chriss trading schedule described in Equation \eqref{sol_AC}, for example) with associated value function $V(t,q_t^\star)$. For each asset $i\in \{1,\dots,d\}$, the trader splits his limit and market orders between $N^i\geq 1$ liquidity venues. We assume that he wants to unwind the portfolio so that $q_T = 0^d$. For all $i\in \{1,\dots,d\},n\in \{1,\dots,N^i\}$, the order book of the asset $i$ on the venue $n$ is characterized by the following quantities:
\begin{itemize}
  \item the bid-ask spread process $(\psi_t^{i,n})_{t \in [0,T]}$ taking values in the state space $\overline{\psi}^{i,n}=\{\delta^{i,n},\dots,J\delta^{i,n}\}$,
  \item the imbalance process $(I^{i,n}_t)_{t \in [0,T]}$ taking values in the state space $\overline{I}^{i,n}=\{I^{i,n}_1,\dots,I^{i,n}_K\}$,
\end{itemize}
where $J, K \in \mathbb{N}$ denote the number of possible spreads and imbalances respectively and $\delta^{i,n}$ stands for the tick size of $i$-th asset on the $n$-th venue. Note that the dynamics are unspecified, meaning that any continuous-time stochastic process with discrete values can be considered for the purpose of simulation.
\begin{definition}
A \textbf{market regime} is, for asset $i\in \{1,\dots,d\}$, a set of spread and imbalance values on the different venues $n\in \{1,\dots,N^i\}$.
\end{definition}
We define the sets $\Psi=\{\Psi_1,\dots,\Psi_{\# \Psi}\}, \mathcal{I}=\{\mathcal{I}_1,\dots,\mathcal{I}_{\# \mathcal{I}}\}$ of disjoint intervals, representing different market regimes of interest in terms of spreads and imbalances.
\begin{example}
Assume $d=1$ and for all $n\in \{1,\dots,N\}$,  $\delta^n=\delta$. The set $\Psi=\big\{\delta, \{2\delta, 3\delta\}, \{4\delta, 5\delta\} \big\}$ denotes three spread regimes: low (one tick), medium (two or three ticks), and high (four or five ticks). 
\end{example}
\begin{example}
Assume $d=1$ and for all $n\in \{1,\dots,N\}$ and $k\in \{1,\dots,K\}$ that $I^n_k=I_k$. In this case the set $\mathcal{I}=\big\{[-1,-0.66], (-0.66, -0.33], (-0.33, 0.33], (0.33, 0.66], (0.66, 1]\big\}$ denotes five regimes of imbalance: low ($-33\%$ to $33\%$), medium on the ask (resp. bid) from $33\%$ to $66\%$ (resp. from $-66\%$ to $-33\%$) and high on the ask (resp. bid) from $66\%$ to $100\%$ (resp. from $-100\%$ to $-66\%$).
\end{example}

The number of, possibly partially, filled ask orders on the asset $i$ in the venue $n$ is modeled by a Cox process denoted by $N^{i,n}$ with intensities $\lambda^{i,n}\big(\psi^i_t, I^i_t, p_t^{i,n},\ell^i_t\big)$ where $p_t^{i,n}\in Q_\psi^{i,n}$ represent the limit at which the trader sends a limit order of size $\ell_t^{i,n}$, and
\begin{align*}
 & Q_\psi^{i,n}=\{0,1\} \text{ if } \psi^{i,n}=\delta^{i,n},\text{ and } \{-1,0,1\} \text{ otherwise.}
\end{align*}
Practically, on asset $i$, for $n\in \{1,\dots,N^i\}$, when the spread is equal to the tick size, the trader can post at the first best limit ($p^{i,n}=0$) or the second best limit (if $p^{i,n}=1$). When the spread is equal to two ticks or more, the trader can either create a new best limit ($p^{i,n}=-1$) or post at the best or the second best limit as previously. The arrival intensity of a buy market order at time $t$ on the venue $n\in \{1,\dots,N^i\}$ for asset $i$ at the limit $p\in Q_\psi^{i,n}$, given a couple $(\psi^i_t,I^i_t)=\mathbf{m}$ of spread and imbalance on each venue, is equal to $\lambda^{i,n,\mathbf{m},p}>0$. When the trader posts limit orders of volume~$\ell_t^{i,n}$ on the $n$-th venue for $n\in \{1,\dots,N^i\}$, the probability that it is executed is equal to $f^{\lambda}(\ell^i_t)$, where $f^\lambda(\cdot)\in [0,1]$ is a continuously differentiable function, decreasing with respect to each of its coordinate. Therefore, the arrival intensity of an ask market order filling the buy limit order of the trader for asset $i$ on the $n$-th venue at the limit $p_t^{i,n}$, given spread and imbalance $(\psi^i_t,I^i_t)$ is a multi-regime function defined by
\begin{align}\label{eq_lambdas_optex}
  & \lambda^{i,n}(\psi^i_t,I^i_t,p_t^{i,n},\ell^i_t)=f^{\lambda}(\ell^i_t)\sum_{\mathbf{m}\in\mathcal{M}^i,p\in Q_\psi^{i,n}} \lambda^{i,n,\mathbf{m},p}\mathbf{1}_{\{(\psi^i_t,I^i_t)\in \mathbf{m}, p_t^{i,n}=p\}},
\end{align}
where $\mathcal{M}^i=\Psi^{N^i} \times \mathcal{I}^{N^i}$. Moreover, we allow for partial execution, the fact of which we represent by random variables $\epsilon^{i,n}_t \in [0,1]$. The proportion of executed volume for limit orders in each venue depends on the spread and the imbalance in all $N^i$ venues for asset $i$, as well as the volume and the limit of the order chosen by the trader. We assume a categorical distribution with $R>0$ different execution proportions $\omega^r, r\in\{1, \ldots, R\}$ for each venue with $\mathbb{P}(\epsilon_t^{i,n} = \omega^r) = \rho^{i,n,r}(\psi^i_t, I^i_t, p_t^{i,n}, \ell^i_t)$, where
\begin{align}\label{eq_rho_optex}
  \rho^{i,n,r}(\psi^i_t,I^i_t,p_t^{i,n},\ell^i_t) = f^\rho (\ell^i_t)\sum_{\mathbf{m}\in \mathcal{M}^i,p\in Q_\psi^{i,n}} \rho^{i,n,\mathbf{m},p,r}\mathbf{1}_{\{(\psi^i_t,I^i_t)\in \mathbf{m},p_t^{i,n}=p\}}, 
\end{align}
where $f^\rho(\cdot)$ is a continuously differentiable function, decreasing with respect to each of its coordinate. \medskip

We allow for the execution of market orders (denoted by a point process $(J_t^{i,n})_{t\in[0,T]}$) on each venue of size $(v_t^{i,n})_{t\in[0,T]}\in [0,\overline{v}]$ where $\overline{v}>0$ and $J_t^{i,n} = J_{t^-}^{i,n} + 1$. We assume that market orders are always fully executed but this assumption can be relaxed easily. As each asset must be bought or sold, we define $\Delta=(\Delta^1,\dots,\Delta^d)$ where for $i\in \{1,\dots,d\}$, $\Delta^i = 1$ if $q_0^i >0, -1$ otherwise. The inventory process on each asset is defined by 
\begin{align*}
    q_t^i = q_0^i - \Delta^i\sum_{n=1}^{N^i}\Big(\int_0^t \ell_s^{i,n} \epsilon_s^{i,n} dN_s^{i,n} + \int_0^t v_s^{i,n} dJ_s^{i,n}\Big).
\end{align*}

The myopic trader has an effective microstructure $(p_t^{\text{eff},i})_{t\in [0,T]}$ in order to follow the pre-computed execution curve $q_t^{\star i}$ on each asset, but also a short-term alpha $(p_t^{\text{short},i})_{t\in [0,T]}$ which is a function of the current spread and imbalance $\mathbf{m}^i$ of all the venues where asset $i$ is listed.
\begin{remark}
The microstructure alpha considered for each asset is the sum of a direct microstructure alpha depending on the market regimes and an effective microstructure alpha that gives a signal to follow the long-term objective function. The sum of these two terms gives the magnitude of the buy or sell signal. For example, suppose $p^{\text{eff},i}$ is small and positive, indicating that filling  a buy order would be a slight improvement to the value function. Suppose with $p^{\text{eff},i}$ alone, the system would have recommended a passive buy order. Now suppose a strongly-positive microstructure alpha, denoted $p^{\text{short},i}$, is also present; 
then the combination $p^{\text{eff},i} + p^{\text{short},i}$  in place of $p^{\text{eff},i}$ should recommend a more aggressive action, such as a spread-crossing buy order.  
\end{remark}

Finally, the cost function of a limit order of size $\ell$ at limit $p$ on venue $n$ for asset $i$ is defined as $c^{i,n,\mathcal{L}}(\ell,p)$ and $c^{i,n,\mathcal{M}}(v)$ for the cost function of a market order of size $v$ on venue $n$ for asset $i$. \medskip

The myopic trader acts at discrete times and at time $t\in [0,T]$ for $(\psi^i_t,I^i_t) \in \mathbf{m}^i, i\in \{1,\dots,d\}$, his optimization problem is 
\begin{align}\label{opt_pb_myopic}
\begin{split}
    \max\bigg\{ & \sup_{\ell,p} \Big\{ \sum_{i=1}^d \sum_{n=1}^{N^i}   \lambda^{i,n}(\mathbf{m}^i,p^{i,n},\ell^i)\mathbb{E}\Big[ \big( p^{\text{eff},i}+ p^{\text{short},i,n}(\mathbf{m}^i)\big) \ell^{i,n} \epsilon^{i,n}  - c^{i,n,\mathcal{L}}(\epsilon^{i,n}\ell^{i,n},p^{i,n}) \Big]     \Big\},\\
   & \sup_{v} \Big\{\sum_{i=1}^d\sum_{n=1}^{N^i} \big( p^{\text{eff},i}+ p^{\text{short},i,n}(\mathbf{m}^i)\big) v^{i,n}  - c^{i,n,\mathcal{M}}(v^{i,n})\Big\} \bigg\},
\end{split}
\end{align}
where the expectation is taken with respect to the variables $\epsilon^{i,n}$ for all $i\in \{1,\dots,d\}, n\in \{1,\dots,N^i\}$. This is a simple static optimization which can be solved for a large number of assets and venues using a multidimensional root-finding method. The output is, for each state $\mathbf{m}^i$, the optimal volumes and limits $\ell^{\star i,n}(\mathbf{m}^i),p^{\star i,n}(\mathbf{m}^i)$ for each asset on each liquidity venue. We define the value function of the myopic trader at time $t$ as
\begin{align*}
    W(t,\psi,I,q) = \int_t^T  \max \bigg\{ & \sum_{i=1}^d \sum_{n=1}^{N^i}   \lambda^{i,n}(\psi_s^i,I_s^i,p_s^{\star i,n},\ell_s^{\star i})\mathbb{E}\Big[ \big( p_s^{\text{long},i}+ p_s^{\text{short},i,n}(\psi_s^i,I_s^i)\big) \ell_s^{\star i,n} \epsilon_s^{i,n} \\ 
    & - c^{i,n,\mathcal{L}}(\epsilon_s^{i,n}\ell_s^{\star i,n},p_s^{\star i,n}) \Big],
    \sum_{i=1}^d\sum_{n=1}^{N^i} \big( p_s^{\text{long},i}+ p_s^{\text{short},i,n}(\psi_s^i,I_s^i)\big) v_s^{\star i,n}  - c^{i,n,\mathcal{M}}(v^{\star i,n}) \bigg\} ds
\end{align*}
where $\psi_t = \psi, I_t=I, q_t = q$.  As in Theorem \ref{thm_main1}, the myopic trader has now to choose the long-term alpha $p^{\text{eff}}$ to match the trading schedule $q^\star$. This leads to the following  optimization setting for all $\mathbf{m}^i\in \mathcal{M}^i$:

\begin{align}\label{opt_pb_myopic_opt}
\tag{Opt-Trd}
\begin{split}
    \max\bigg\{ & \sup_{\ell,p} \Big\{ \sum_{i=1}^d \sum_{n=1}^{N^i}   \lambda^{i,n}(\mathbf{m}^i,p^{i,n},\ell^i)\mathbb{E}\Big[ \big( p^{\text{eff},i}+ p^{\text{short},i,n}(\mathbf{m}^i)\big) \ell^{i,n} \epsilon^{i,n}  - c^{i,n,\mathcal{L}}(\epsilon^{i,n}\ell^{i,n},p^{i,n}) \Big]     \Big\},\\
   & \sup_{v} \Big\{\sum_{i=1}^d\sum_{n=1}^{N^i} \big( p^{\text{eff},i}+ p^{\text{short},i,n}(\mathbf{m}^i)\big) v^{i,n}  - c^{i,n,\mathcal{M}}(v^{i,n})\Big\} \bigg\}, \\
   & p^{\text{eff}} = \text{argmin}_{p^{\text{eff}}}\big|V(\cdot,q_{\cdot}^\star)-W(\cdot,\psi_{\cdot},I_{\cdot},q_{\cdot})\big|.
\end{split}
\end{align}
In this general framework, order scheduling with a long-term target is easily tractable even for a large portfolio of assets, as the trader has to solve a static optimization problem at each trading time. For a parsimonious model of filling probabilities, the effective microstructure alpha can be computed in closed form. Note that each time a fill is received that changes the portfolio holdings, and/or each time a significant amount of time passes, the effective microstructure alpha $p^{\text{eff}}$ must be recomputed. 

\begin{remark}
The methodology presented in this paper leads to entirely tractable optimization problems, even for a large number of assets. This is the case when we have a closed-form solution for the long-term value function $V$, which can be computed quickly. It also suggests an approximation of the effective microstructure alpha, that is to take $p_t^{\text{eff},i} = \nabla_i V(t,q_{\cdot}^\star)$. This heuristics will be used in the next section to solve a different control problem.
\end{remark}

\subsection{Market-making}

The great advantage of the framework presented in this paper is that it avoids the use of optimal control to tackle optimal trading problems. The trader solves a simple static optimization problem, and the use of the generalized momenta as a long-term alpha plays the role of the trading schedule. Similar ideas can be applied to the market-making problem, with some minor changes. \medskip

\begin{definition}
Consider an agent trading on a portfolio of correlated assets, where each of them is listed on one or several liquidity venues. His goal is to earn the difference between the bid and ask prices (the bid-ask spread) while keeping his inventory close to zero to avoid an unwanted large exposure and be forced to buy at a higher price or sell at a lower price in order to unwind this position. The \textbf{multi-asset, multi-venue optimal market-making problem} consists in deriving at a given time, for each asset, the optimal quantity to buy or sell in each venue, for given market conditions, as well as the optimal limit at which such quantity should be posted, with an inventory vector mean-reverting around zero or some predetermined target. 
\end{definition}

The market-making problem has been introduced in the financial literature by \cite{ho1981optimal,glosten1985bid}. Ho and Stoll presented a framework to tackle inventory management, while Grossman and Miller proposed a $3$ periods model that encompassed both market-makers and final customers, enabled them to understand what happens at equilibrium, and contributed to the important literature on the price formation process. The seminal reference of the recent literature on market-making is the work of Avellaneda and Stoikov in \cite{avellaneda2008high}, who proposed a stochastic control framework to tackle the quoting and inventory management problems. Since then, a vast literature on optimal market-making has emerged, basically adding many features to the Avellaneda and Stoikov framework, see for example \cite{cartea2014buy,gueant2013dealing} and the two textbooks \cite{cartea2015algorithmic,gueant2016financial}. These works deal with single asset market-making, and the considered framework is more suitable for OTC markets rather than order-driven markets. The problem of multi-asset market-making, dealing with the curse of dimensionality, has been addressed via deep reinforcement learning methods, see for example \cite{gueant2019deep}. Models for optimal market-making in limit order books have been developed for the single asset case, see \cite{guilbaud2013optimal}, for example. All these models suffer from the same problem when dealing with a portfolio of assets: solving a high-dimensional Hamilton-Jacobi-Bellman equation makes the problem almost intractable in practice. In this section, we propose an adaptation of the previously described heuristics to tackle the multi-asset market-making problem in limit order books.

\subsubsection{The long-term objective function}

Our methodology to solve optimal control problems in high dimension relies on the fact that the effective microstructure alphas come from a long-term objective function computed analytically. This is the case of the Almgren-Chriss trading curve, which hides the microstructure effects that are incorporated in the myopic optimization problem. However, as stated previously, the main constraint of the optimal market-making problem is that, even for market-making on OTC markets, the value function's computation is very time-consuming. We propose to use the gradient of an approximation of the value function of the optimal market-making problem on OTC markets as the effective microstructure alphas for the optimal market-making problem in order books. To this end, we borrow the OTC framework of \cite{bergault2020closedform} and recall their modeling assumptions briefly.\medskip

For $i\in \{1,\dots,d\}$, the reference price of asset $i$ is modeled by a process $S_t^i$ with dynamics
\begin{align*}
dS_t^i = \sigma^i dW_t^i,
\end{align*}
where $(W_t^i,\dots,W_t^d)$ is a $d$-dimensional Brownian motion with variance-covariance matrix $\Sigma$. At each $t\in [0,T]$, the market-maker chooses the prices $P_t^{i,b},P_t^{i,a}$ at which she is ready to buy/sell each asset $i$. These prices are given by
\begin{align*}
P_t^{i,b} = S_t^i - \delta_t^{i,b}, \quad P_t^{i,a} = S_t^i + \delta_t^{i,a}, 
\end{align*}
where $\delta_t=(\delta_t^{1,b},\delta_t^{1,a},\dots,\delta_t^{d,b},\delta_t^{d,a})$ are the control processes of the market-maker corresponding to the bid and ask spreads set on each asset $i$. For $i\in \{1,\dots,d\}$, the point processes $N_t^{i,b},N_t^{i,a}$ denote the total number of bid and ask transactions between $0$ and $t$ on asset $i$. Their intensities are given by $\Lambda^{i,b}(\delta_t^{i,b}),\Lambda^{i,a}(\delta_t^{i,a})$ where the functions $\Lambda^{i,b},\Lambda^{i,a}$ satisfy some technical conditions, see \cite{bergault2020closedform} for details. These conditions are sufficiently general to allow the use of several form of intensity such as exponential, logistic, SU Johnson etc. \medskip

The transaction size for asset $i$ is constant and denoted by $z^i$, and the inventory process of the market-maker for asset  $i$ is 
\begin{align*}
dq_t^i = z^i \big( dN_t^{i,b} - dN_t^{i,a}\big), \quad q_t=(q_t^1,\dots,q_t^d)^{\mathbf{T}}. 
\end{align*}
The cash process of the market-maker has the following dynamics:
\begin{align*}
dX_t = \sum_{i=1}^d \big(P_t^{i,a}dN_t^{i,a} -  P_t^{i,b}dN_t^{i,b}\big). 
\end{align*}
The optimization problem of the market-maker is defined by
\begin{align*}
    \sup_{\delta} \mathbb{E}\Big[X_T + \sum_{i=1}^d q_T^i S_T^i - \frac{\gamma}{2}\int_0^T q_s^{\mathbf{T}}\Sigma q_s dt \Big],
\end{align*}
and simple computations\footnote{It can be shown by simple change of variables that the value function of this control problem is only a functional of the time and the inventories.} give the associated value function at time $t$ for a given inventory vector $q_t=q$: 
\begin{align}\label{opt_MM}
V(t,q)=\sup_{\delta} \mathbb{E}_t\bigg[\sum_{i=1}^d\int_t^T \Big( \delta_s^{i,a}\Lambda^{i,a}(\delta_s^{i,a}) + \delta_s^{i,b}\Lambda^{i,b}(\delta_s^{i,b})- \frac{\gamma}{2}q_s^{\mathbf{T}}\Sigma q_s\Big) dt \bigg],
\end{align}
where $\mathbb{E}_t$ denote the conditional expectation with respect to the canonical filtration at time $t$ and $\gamma>0$ is the risk-aversion of the market-maker. He wishes to maximize the sum of his cash process and the mark-to-market value of his inventory. The running penalty forces him to mean-revert his inventories to zero. We now state the main proposition of \cite{bergault2020closedform} that provides a closed form approximation of $V(t,q)$, and refer to this article for the proof. 
\begin{proposition}
Define the functions 
\begin{align*}
H^{i,b}(p)=\sup_\delta \{\Lambda^{i,b}(\delta)(\delta-p)\} ,\quad  H^{i,a}(p)=\sup_\delta \{\Lambda^{i,a}(\delta)(\delta-p)\}, 
\end{align*}
and the constants $\alpha_j^{i,b} =(H^{i,b})^j(0),\alpha_j^{i,a} =(H^{i,a})^j(0)$, where the superscript $j\in \{0,1,2\}$ denote the derivative of order $j$. Define also for $k\in \mathbb{N}$
\begin{align*}
& \Delta_{j,k}^{i,b} = \alpha_j^{i,b} (z^i)^k, \quad  \Delta_{j,k}^{i,a} = \alpha_j^{i,a} (z^i)^k, \\
& V_{j,k}^b = \Big(\Delta_{j,k}^{1,b},\dots, \Delta_{j,k}^{d,b}\Big), \quad V_{j,k}^a = \Big(\Delta_{j,k}^{1,a},\dots, \Delta_{j,k}^{d,a}\Big), \\
& D_{j,k}^b = \text{diag}\Big(\Delta_{j,k}^{1,b},\dots, \Delta_{j,k}^{d,b}\Big), \quad D_{j,k}^a = \text{diag}\Big(\Delta_{j,k}^{1,a},\dots, \Delta_{j,k}^{d,a}\Big). 
\end{align*}
Then if $\alpha^{i,b}_2+\alpha^{i,a}_2>0$, the value function of the optimal control problem \eqref{opt_MM} can be approximated by the function 
\begin{align}\label{approx_MM_fct}
  \tilde{V}(t,q) = -q^{\mathbf{T}}A(t) q - q^{\mathbf{T}} B(t) - C(t), 
\end{align}
where $A: [0,T] \to \mathcal{S}_d^{++}, B: [0,T] \to \mathbb{R}^d$ and $C: [0,T] \to \mathbb{R}$ are deterministic functions given by
\begin{align*}
& A(t)= \frac{1}{2}D_+^{-\frac{1}{2}}\hat{A}\big(e^{\hat{A}(T-t)}-e^{-\hat{A}(T-t)}\big)\big(e^{\hat{A}(T-t)}+e^{-\hat{A}(T-t)}\big)^{-1}D_+^{-\frac{1}{2}}, \\
& B(t) = -2e^{-2\int_t^T A(u)D_+ du} \int_t^T e^{2\int_s^T A(u)D_+ du} A(s)\Big(V_- + D_- \mathcal{D}\big(A(s)\big)\Big)ds \\
& C(t)= -\text{Tr}(D_{0,1}^b + _{0,1}^a)(T-t) -\text{Tr}\Big((D_{1,2}^b+D_{1,2}^a)\int_t^T A(s)ds\Big) - V_-^{\mathbf{T}}\int_t^T B(s)ds \\
& -\frac{1}{2}\int_t^T \mathcal{D}\big(A(s)\big)^{\mathbf{T}}(D_{2,3}^b + D_{2,3}^a)\mathcal{D}\big(A(s)\big) ds - \frac{1}{2}\int_t^T B(s)^{\mathbf{T}}D_+ B(s) ds \\
& -\int_t^T B(s)^{\mathbf{T}}D_- B(s) ds,
\end{align*}
with
\begin{align*}
D_+ = D_{2,1}^b+  D_{2,1}^a, D_- = D_{2,2}^b -  D_{2,2}^a, V_- = V_{1,1}^b-V_{1,1}^a, \hat{A} = \sqrt{\gamma} \big(D_+^{\frac{1}{2}} \Sigma  D_+^{\frac{1}{2}}\big)^{\frac{1}{2}},
\end{align*}
and $\mathcal{D}$ is the linear operator mapping a matrix onto the vector of its diagonal and $\mathcal{S}_d^{++}$ is the set of $d\times d$ definite positive matrix. 
\end{proposition}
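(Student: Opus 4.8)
\noindent The statement is recalled verbatim from \cite{bergault2020closedform}, so the plan is to reconstruct the PDE/verification argument that underlies it. First I would write the Hamilton--Jacobi--Bellman equation for the control problem \eqref{opt_MM}. Because the inventory $q$ jumps by $\pm z^i e_i$ (with $e_i$ the $i$-th basis vector) at the controlled intensities $\Lambda^{i,b}(\delta^{i,b}),\Lambda^{i,a}(\delta^{i,a})$, the value function $V$ solves an integro-differential equation in which the control $\delta$ is optimized out pointwise. Setting $p^{i,b}=-\big(V(t,q+z^i e_i)-V(t,q)\big)/z^i$ and likewise $p^{i,a}=-\big(V(t,q-z^i e_i)-V(t,q)\big)/z^i$, the inner suprema are exactly $z^i H^{i,b}(p^{i,b})$ and $z^i H^{i,a}(p^{i,a})$, so the HJB reads
\begin{align*}
0 = \partial_t V - \frac{\gamma}{2} q^{\mathbf{T}}\Sigma q + \sum_{i=1}^d z^i\Big[ H^{i,b}\big(p^{i,b}\big) + H^{i,a}\big(p^{i,a}\big)\Big], \qquad V(T,q)=0.
\end{align*}

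\noindent Second, I would perform the two approximations that turn this nonlocal equation into a local PDE admitting a closed-form solution. Expanding the finite differences to second order, $p^{i,b}\approx -\partial_{q_i}V-\tfrac{z^i}{2}\partial_{q_i}^2 V$ and $p^{i,a}\approx \partial_{q_i}V-\tfrac{z^i}{2}\partial_{q_i}^2 V$, and expanding each Hamiltonian to second order about the origin, $H^{i,\bullet}(p)\approx \alpha_0^{i,\bullet}+\alpha_1^{i,\bullet}p+\tfrac12\alpha_2^{i,\bullet}p^2$, produces a second-order PDE whose coefficients are precisely the constants $\alpha_j^{i,\bullet}$ weighted by powers of $z^i$ --- this is exactly how the blocks $\Delta_{j,k}^{i,\bullet}=\alpha_j^{i,\bullet}(z^i)^k$ and the assembled objects $D_\pm, V_-$ enter. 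Substituting the quadratic ansatz $\tilde V(t,q)=-q^{\mathbf{T}}A(t)q-q^{\mathbf{T}}B(t)-C(t)$, for which $\partial_{q_i}\tilde V$ is affine in $q$ and $\partial_{q_i}^2\tilde V=-2A_{ii}(t)$, and matching the coefficients of the quadratic, linear, and constant monomials in $q$ yields a decoupled system: a symmetric matrix Riccati equation for $A$, a linear ODE for $B$ driven by $A$, and a pure-integration identity for $C$, each with terminal value $0$.

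\noindent Third, I would solve these ODEs explicitly. The Riccati equation takes the form $\dot A = 2AD_+A-\tfrac{\gamma}{2}\Sigma$; conjugating by $D_+^{1/2}$ and diagonalizing reduces it to a scalar $\tanh$-type solution in each eigendirection, which reassembles into $A(t)=\tfrac12 D_+^{-1/2}\hat A\big(e^{\hat A(T-t)}-e^{-\hat A(T-t)}\big)\big(e^{\hat A(T-t)}+e^{-\hat A(T-t)}\big)^{-1}D_+^{-1/2}$ with $\hat A=\sqrt{\gamma}\big(D_+^{1/2}\Sigma D_+^{1/2}\big)^{1/2}$. The standing assumption $\alpha_2^{i,b}+\alpha_2^{i,a}>0$ is exactly what makes $D_+=\text{diag}\big((\alpha_2^{i,b}+\alpha_2^{i,a})z^i\big)$ positive definite, so that $D_+^{1/2}$, $\hat A$, and hence $A(t)\in\mathcal{S}_d^{++}$, are well defined. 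With $A$ in hand, $B$ follows from variation of parameters with integrating factor $e^{-2\int_t^T A(u)D_+\,du}$, and $C$ by direct integration of the remaining constant terms; collecting these reproduces the stated formulas.

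\noindent The main obstacle is not the algebra of the ODEs --- those are routine once the ansatz is fixed --- but justifying that $\tilde V$ is a genuine approximation of $V$. This requires controlling the two truncation errors (the $O((z^i)^3)$ remainder in the finite-difference expansion and the higher-order remainder in the Hamiltonian expansion), and then invoking a verification/comparison argument to convert ``$\tilde V$ solves the approximate HJB'' into a bound on $|\tilde V-V|$. Establishing that the quadratic ansatz is admissible --- that the implied feedback spreads $\delta^\star$ are well defined and that the candidate is the true value of the approximate problem --- is the delicate analytic step, and it is here that I would lean on the regularity hypotheses on $\Lambda^{i,b},\Lambda^{i,a}$ assumed in \cite{bergault2020closedform}.
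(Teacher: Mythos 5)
The paper you are checking against contains no proof of this proposition: it states it as the main result of \cite{bergault2020closedform} and explicitly refers the reader to that article, so the only meaningful comparison is with the cited reference. Measured against that reference, your reconstruction follows essentially the same route: the jump-control HJB with the Hamiltonians $H^{i,b},H^{i,a}$ evaluated at the normalized finite differences of the value function, the quadratic ansatz $\tilde V(t,q)=-q^{\mathbf{T}}A(t)q-q^{\mathbf{T}}B(t)-C(t)$, the matrix Riccati equation for $A$ solved by conjugating with $D_+^{1/2}$ and diagonalizing (yielding the stated $\tanh$-type formula with $\hat A=\sqrt{\gamma}\,(D_+^{1/2}\Sigma D_+^{1/2})^{1/2}$), variation of parameters for $B$, direct integration for $C$, and the observation that $\alpha_2^{i,b}+\alpha_2^{i,a}>0$ is precisely what makes $D_+$ positive definite so that all these objects exist. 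Two points of divergence are worth flagging.

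First, the reference makes only \emph{one} approximation: the Hamiltonians are replaced by their second-order Taylor polynomials at $0$. The finite differences are never expanded; they are kept exact, and under the quadratic ansatz they are exactly affine in $q$, namely $\bigl(\tilde V(t,q)-\tilde V(t,q+z^ie_i)\bigr)/z^i=2(A(t)q)_i+z^iA_{ii}(t)+B_i(t)$, so the modified equation closes with no further truncation. Your double expansion produces the identical ODEs (the second-order Taylor expansion of a finite difference of a quadratic is exact), but it frames as a truncation error something that is not an error at all, and then proposes to control it unnecessarily. Second, and more substantively, your closing paragraph overstates what the cited proof establishes. The result proved there is a verification statement for the \emph{modified} problem: $\tilde V$ is the exact solution of the HJB in which $H^{i,b},H^{i,a}$ are replaced by their quadratic expansions (equivalently, the value function of a perturbed control problem). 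No rigorous bound on $|\tilde V-V|$ is derived there; the approximation quality is assessed numerically. So the ``delicate analytic step'' you defer to the reference is not actually carried out in it, and the word ``approximated'' in the statement must be read in this weaker sense. With that caveat, your plan is a faithful and correct reconstruction of the argument behind the formulas.
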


The approximated value function \eqref{approx_MM_fct} is quadratic, therefore, sub-differentiable with respect to the vector of inventories $q$ and the deterministic functions $A(t),B(t),C(t)$ can be computed in closed form. It takes into account the main property of a ``high-level'' multi-asset market-making problem, that is, the correlation structure between the assets. By analogy with Section \ref{subsec_opt_trading}, its sub-gradient can be chosen as an effective microstructure alpha to mean revert toward a flat inventory.
\begin{remark}
In order to compute efficiently the value function in \eqref{approx_MM_fct}, note that the matrix $A$ can be diagonalized and therefore approximated with a principal component analysis. The expressions of $A,B,C$ do not provide intuition about the long-term behavior of the market-maker. However, in an asymptotic framework, that is when $T\to +\infty$, we obtain 
\begin{align*}
    & A \to_{T\to+\infty} \frac{1}{2}\sqrt{\gamma}\Gamma, \\
    & B \to_{T\to +\infty} -D_+^{-\frac{1}{2}}\hat{A}\hat{A}^+D_+^{-\frac{1}{2}}\big(V_- + \frac{1}{2}\sqrt{\gamma} D_- \mathcal{D}(\Gamma) \big),
\end{align*}
where $\Gamma=D_+^{-\frac{1}{2}}\big(D_+^{\frac{1}{2}}\Sigma D_+^{\frac{1}{2}}\big)^{\frac{1}{2}}D_+^{-\frac{1}{2}}$ and $\hat{A}^+$ is the Moore-Penrose generalized inverse of $\hat{A}$. If we perform a principal component analysis on the variance-covariance matrix $\Sigma$, we observe that the buy or sell signal (depending on the sign of the inventories) coming from the sub-gradient of $V(t,q)$ is an increasing function of the eigenvalues of $\Sigma$ and the risk-aversion parameter $\gamma$. Thus, choosing the sub-gradient of $V(t,q)$ as an effective microstructure alpha should provide a mean-reverting signal for a myopic agent, taking into account the correlation between the assets.
\end{remark}

\subsubsection{Multi-asset multi-venue optimal market-making in limit order book}

We now derive the solution to the multi-asset multi-venue optimal market-making problem using our heuristics. We take the same modeling notation as in \ref{subsec_opt_trading}. Assuming bid and ask symmetry for sake of simplicity, we introduce the processes $N^{i,n,b},N^{i,n,a}$ to model the number of (possibly partially-filled) bid and ask orders on the asset $i$ in the venue $n$ of intensity $\lambda^{i,n}(\psi^i,I^i,p^{i,n,b},\ell^{i,b})$ for the bid side and $\lambda^{i,n}(\psi^i,I^i,p^{i,n,a},\ell^{i,a})$ for the ask side, where the function $\lambda$ is defined by \eqref{eq_lambdas_optex}. The quantity $\ell^{i,n,b}$ (resp. $\ell^{i,n,a}$) is the volume sent on the $n$-th venue of the $i$-th asset on the bid (resp. ask) side. The quantity $p^{i,n,b}$ (resp. $p^{i,n,a}$) is the limit chosen by the market-maker on the $n$-th venue of the $i$-th asset on the bid (resp. ask) side to send a limit order. The distribution of the random variables $\epsilon_t^{i,n,b}$ and $\epsilon_t^{i,n,a}$ are defined as in Equation \eqref{eq_rho_optex}. The market-maker can also send market orders on the bid and ask sides (denoted by point processes $J_t^{i,n,b},J_t^{i,n,b}$) on each venue of size $v_t^{i,n,b},v_t^{i,n,a}$. Its inventory process on each asset is defined by
\begin{align*}
    q_t^i = \Big(\int_0^t \ell_s^{i,n,b} \epsilon_s^{i,n,b} dN_s^{i,n,b} + \int_0^t v_s^{i,n,b} dJ_s^{i,n,b}\Big) - \Big(\int_0^t \ell_s^{i,n,b} \epsilon_s^{i,n,b} dN_s^{i,n,b} + \int_0^t v_s^{i,n,b}  dJ_s^{i,n,b}\Big) \in \mathbb{Z}. 
\end{align*}
The problem faced by a market-maker is slightly different compared to a classic trader. While the trader must follow a predetermined target, the market-maker's inventory must revert toward zero. Therefore, we seek a long-term alpha that gives a signal to our myopic market-maker of the form ``sell for high inventory, buy for low inventory'' with different type of aggressiveness (limit or market order) depending on the level of inventory. Contrary to optimal execution, there is no optimal inventory in market-making problems at a given time $t\in [0,T]$, which explains the dependence of the long-term alpha on the current inventory. \medskip

As stated previously, the effective microstructure alphas should be the gradient of a value function corresponding to a ``high-level'' multi-asset market-making problem (which hides the microstructure effects). This value function should be in closed form to recompute the gradient quickly when the market-maker trades and too much time passed. Thus, we propose to use the sub-gradient of the value function \eqref{approx_MM_fct} corresponding to an approximation of the multi-asset market-making value function in OTC markets as a proxy for effective microstructure alphas used by a market-maker acting on a portfolio of assets listed on several order book platforms.
By analogy with Section \ref{subsec_opt_trading}, at each time step $t$, for an inventory vector $q\in \mathbb{R}^d$, the market-maker solves the following optimization problem:
\begin{align}\label{opt_pb_myopic_MM}
    \max\bigg\{\! & \sup_{\ell,p} \!\Big\{\! \sum_{i=1}^d \sum_{n=1}^{N^i} \! \sum_{j\in \{b,a\}} \!\!\!\! \lambda^{i,n}(\mathbf{m}^i,\!p^{i,n,j},\!\ell^{i,j})\mathbb{E}\Big[ \big( p^{\text{eff},i}\!+\! p^{\text{short},i,n}(\mathbf{m}^i)\big) \ell^{i,n,j} \epsilon^{i,n,j} \! -\! c^{i,\!n,\!\mathcal{L}}(\epsilon^{i,n,j}\ell^{i,n,j},\!p^{i,n,j}) \Big]     \Big\},\nonumber\\ \tag{Opt-MM}
   & \sup_{v^b} \Big\{\sum_{i=1}^d\sum_{n=1}^{N^i} \big( p^{\text{eff},i}+ p^{\text{short},i,n}(\mathbf{m}^i)\big) v^{i,n,b}  - c^{i,n,\mathcal{M}}(v^{i,n,b})\Big\} , \nonumber\\
   & \sup_{v^a} \Big\{\sum_{i=1}^d\sum_{n=1}^{N^i} \big( p^{\text{eff},i}+ p^{\text{short},i,n}(\mathbf{m}^i)\big) v^{i,n,a}  - c^{i,n,\mathcal{M}}(v^{i,n,a})\Big\} \bigg\}, \nonumber \\
   & p^{\text{eff},i} = \nabla^i \tilde{V} (t,q).\nonumber
\end{align}
The control problem is essentially a choice between sending limit orders or market orders in each venue for each asset. The effective microstructure alpha helps the market-maker to mean revert his inventory toward zero. For example, assume that the market-maker received a large buy passive filling in asset $i$. The effective microstructure alpha, that is the $i$-th component of the gradient of the long-term utility function $\tilde{V}$, will point down which is a strong sell signal. Therefore, the market-maker will send a sell market order to reduce his long position. Note that the effective microstructure alpha takes into account the correlation structure between the assets, meaning that the market-maker can hedge a long position in an asset with a short position in another positively correlated asset. \medskip

\section{Numerical results}\label{sec_numerical_results}

Mathematical elegance and simplicity are to be prized, of course, but an execution model cannot pass the test of practicality until it helps us execute portfolio transitions.  \medskip

One of the most important features of our framework, as compared
with a plain-vanilla, Almgren-Chriss executor, it allows the executor to consider market microstructure and use passive orders, hence avoiding certain types of market impact and spread costs. The main point we wish to make in this example is that our method potentially avoids the pitfalls of a purely-passive execution model because it can consider the utility gradient (and its multi-period analog, the gradient of the value function) in the formation of aggression levels. With this specific aim in mind, we consider the liquidation of a market-neutral portfolio with our 
method and contrast this with comparable results for a purely-passive 
method. \medskip

The specific example we choose is the liquidation of a market-neutral portfolio on October $15, 2008$. The portfolio to be liquidated is long IBM and short AAPL. We choose the long position in IBM arbitrarily to be 1000 shares. We estimate the CAPM beta of each security, denoted $\hat{\beta}_i$ where $i = 1,2$, using three years of daily data, and size the short position so that the beta exposure of the portfolio $\sum_i h_i \hat{\beta}_i$ is near zero. \medskip

\subsection{Transaction cost model and microstructure simulation}

We take $\kappa = 10^{-3}$ and we assume $c(v) = v^{\mathbf{T}} \Lambda v$ for some diagonal 
matrix $\Lambda = \text{diag}(\lambda^1, \ldots, \lambda^d)$. This reduces 
the transaction cost modeling problem to one of estimating appropriate 
values for each $\lambda_i$. Let $\text{advp}_i$ denote our prediction of the daily dollar volume in the $i$-th security. The notation ``advp'' comes from the fact that it is computed as the average daily volume ``adv'' in shares, times the price ``p''. For simplicity we assume trading one percent of $\text{advp}_i$ will cause 20 basis points of market impact, with extension by linearity, meaning that 
\begin{equation} \label{eq:lambda_i} 
\lambda^i = 20 \times 10^{-4} \times \frac{1}{0.01 \text{advp}_i}.
\end{equation}
For very large trades (say, more than 0.05 $\text{advp}_i$), simple models such as \eqref{eq:lambda_i} break down. For this reason, we restrict our attention in this example to trades that are relatively small with respect to the anticipated volume. \medskip

One of the most challenging aspects of this study is simulating passive
execution, which we defined previously as a process of continually joining the queue on the near side of the limit order book until the order is filled, but never crossing the spread. \medskip

We are limited to the academic data sets available via the Wharton Research Data Services (WRDS). For this exercise, we used the New York Stock Exchange Trade and Quote (TAQ) database, which contains intraday transactions data on trades and quotes for all securities listed on the New York Stock Exchange (NYSE) and American Stock Exchange (AMEX), as well as Nasdaq National Market System (NMS) and SmallCap issues. \medskip

The TAQ database represents the aggregate inside quote for each exchange. Therefore, it includes both specialists and the public limit order book. Only having access to the consolidated feed, we construct a conservative simulation of when passive fills occur. Specifically, if we have a ``buy'' limit order (the entire process is similar for limit ``sell'' orders with ``bid'' replaced by ``ask'') which is simulated as existing in the queue on the bid side of the order book, when can we assume such an order was filled? Conservatively, if the order book changes and the new ask price is less or equal to the existing limit order price, we assume that markets would have cleared in the process of this change, and our limit order would have been filled, at least partially. We limit the amount of fill to the posted quantity at the new ask price. If this quantity is simulated to have been taken out, then no further fills are allowed to occur in the simulation until the price level changes. We assume that when the price level of the NBBO has changed, the liquidity is also replenished to the reported value at the new price level. This is a fairly conservative set of conventions; in reality, a larger number of passive fills could occur than merely the ones we simulate. This is because if there are multiple limit orders in the queue, one limit order can, of course, be filled without either bid or ask price levels changing. \medskip

Predicting the probability of a passive fill, denoted $f_i$ above, 
is equivalent to predicting the next transition of the limit order book and hence requires a model of limit order book dynamics. Indeed, such fill probabilities are one of the possible outputs of the very detailed model of \cite{cont2010stochastic} or the microstructure trading model presented in the previous section. As our data set is only the consolidated feed, we simply take $f_i = 0.1$ as the passive fill probability.

\subsection{Results}

As indicated above, we construct a market-neutral portfolio of $d = 2$ 
securities in which the long side is initially 1000 shares of IBM. 
Security $i = 1$ is IBM and $i = 2$ is AAPL. We estimate the security 
betas to the S\&P 500 (via regression on several years of daily data) as 
\begin{equation}\label{eq:betahat} 
\hat{\beta}_1 = 0.705, \quad 
\hat{\beta}_2 = 1.276 \, .
\end{equation}
We begin the simulation at 10:00 am on October 15, 2008, rather than 
immediately at the open since there are often outlier quotes, wide spreads 
and other effects around the open. The most recent midpoint price of IBM at 10:00 am was $p_1 = 93.06$ and for AAPL, $p_2 = 105.985$. \medskip

For convenience, we keep track of a cash balance for each position. The $n_1 = 1000$ shares of IBM are financed by borrowing $n_1 p_1 = $ USD 93,060 in cash and purchasing a position initially worth USD 93,060, so the net value 
(cash plus stock) of that position is initially zero. Similarly, the short position in AAPL is obtained by borrowing $n_2 = -485$ shares and immediately selling them for USD 51,403, and this position also initially has a net (cash + stock) value of zero. Note that with these holdings, \eqref{eq:betahat} implies that the portfolio's beta is 
\[ 
	n_1 p_1 \hat{\beta}_1 + n_2 p_2 \hat{\beta}_2 \approx 0 \, . 
\]
Any cash generated from further stock sales or cash used for further purchases of the same security is considered part of the separate cash balance allocated to that position. As prices change and as orders are filled, the values of each position will fluctuate. \medskip

Let $n_{i,t}$ denote the number of shares held in the $i$-th security at time $t$, and $p_{i,t}$ the latest midpoint price as of time $t$. 
Also, let $c_{i,t}$ denote the amount of cash (which can be positive or negative) attributed to the $i$-th security at time $t$, according to the accounting conventions outlined above. These variables change throughout the lifetime of the execution. \medskip

The value of a position is the number of shares held times the most 
recent midpoint price, plus the total amount of cash associated to the position, i.e. 
$n_{i,t} p_{i,t} + c_{i,t}$. 
The value of a portfolio is the sum of the values of all its positions, i.e. 
\begin{equation}\label{eq:value} 
\text{value}_t := \sum_{i=1}^d (n_{i,t} p_{i,t} + c_{i,t}).
\end{equation}
The value process \eqref{eq:value}, and especially its drift, is one measure of the execution's quality. If the value tends to drift downward, as in the AlwaysPassive model detailed below, then the execution desk is losing money due to slippage. This is perhaps the typical situation -- one expects execution to have associated costs. A particularly pleasant situation arises when the drift of the portfolio value process \eqref{eq:value} is zero, as in Figure \ref{fig:BenvenisteRitterPort}, and it is possible that with very good microstructure alphas added to the generalized momenta, the drift could even become positive. All monetary values are reported in USD. The predicted daily volumes are estimated to 
\[ 
\text{advp}_1 \approx 1.16 \times 10^9, \quad 
\text{advp}_2 \approx 6.13 \times 10^9 \, .
\]
The covariance matrix is 
\[ 
\Sigma = 10^{-4} \times \left(
\begin{array}{cc}
15.5728 & 17.7558 \\
17.7558 & 28.6519 \\
\end{array}
\right),
\]
which implies a correlation of 0.84 among the two assets and daily volatilities of approximately 3.9\% and 5.4\%.  \medskip

The output of our algorithm is the instantaneous 
aggression level: aggressive, passive, or wait. 
There is not a unique benchmark to gauge such an algorithm's 
performance, but it is sensible to compare a complicated method 
of choosing the aggression level to a simple method for choosing 
the aggression level, to see if the additional complexity is justified. 
Hence one could compare it to a constant aggression level -- always passive. \medskip 

Figure \ref{fig:AlwaysPassivePort}  reveals that, as the market was falling, the passive ``buy'' orders in AAPL were all filled very quickly, while 
unsurprisingly the ``sell'' orders in IBM were filled very slowly, and indeed were not even finished by the end of the trading day. This drove the 
Gross Market Value (gmv) down while pushing the net and beta higher, where we define 
\begin{equation}\label{eq:beta-t}
\beta_t := \sum_i n_{i,t} p_{i,t} \hat{\beta}_i, 
\ \ 
\text{net}_t := \sum_i n_{i,t} p_{i,t}, 
\ \ 
\text{gmv}_t := \sum_i |n_{i,t} p_{i,t}|, 
\end{equation}
with $\hat{\beta}_i$ given by \eqref{eq:betahat}. Thus the portfolio had $\beta_t > 0$ in a falling market. Note that the losses incurred in this manner do not become gains if the sign of the market move is reversed; they remain losses irrespective of the market's direction. In a rising market, the ``always passive'' model would have the same problem: the ``sell'' orders would be filled quickly, the ``buy'' orders would linger, and the portfolio would build up negative beta in a rising market. 
\vskip 0.05in
\begin{figure}[H]
	\includegraphics[width=0.45\textwidth,height=0.25\textwidth]{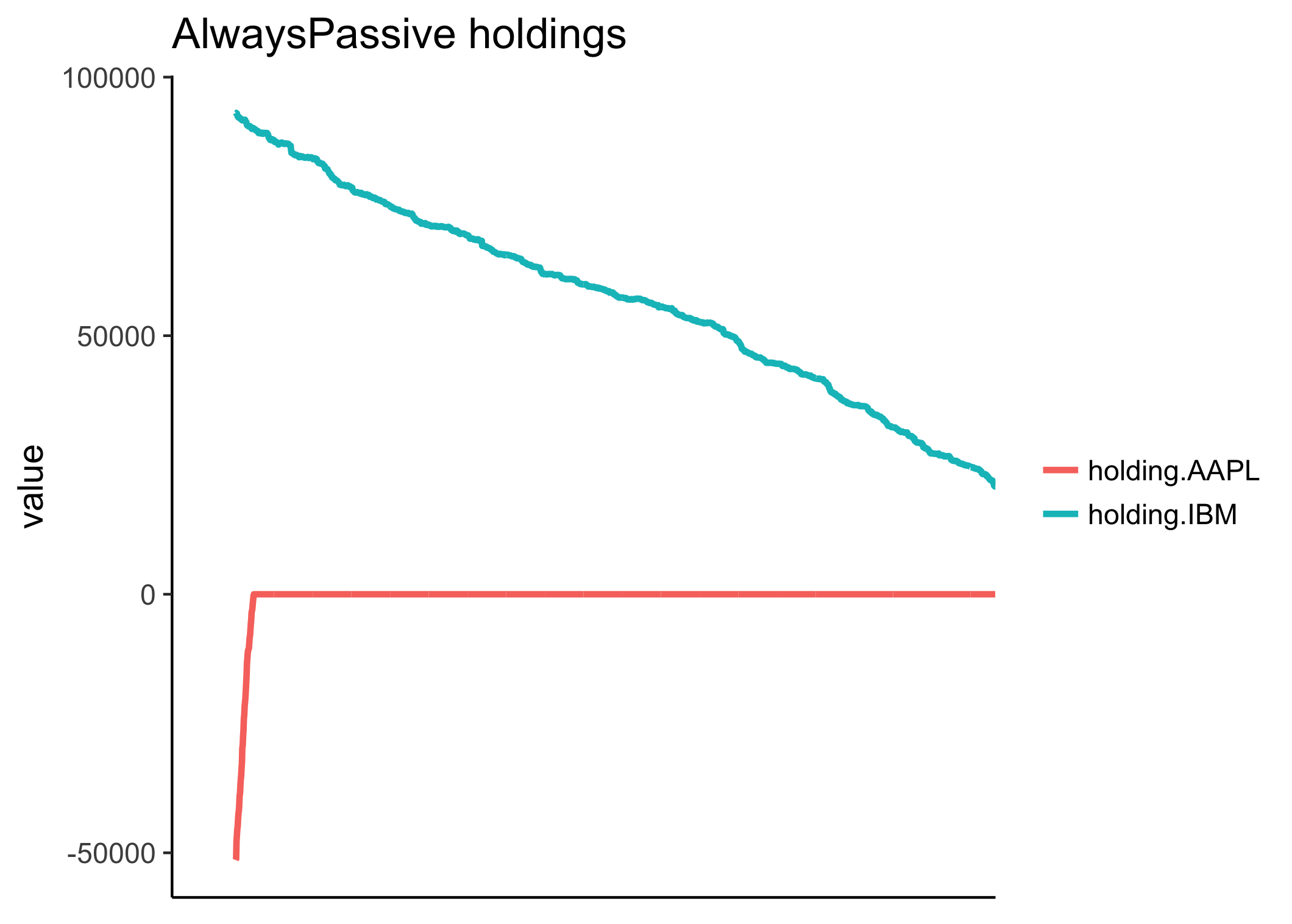}\qquad\includegraphics[width=0.45\textwidth,height=0.25\textwidth]{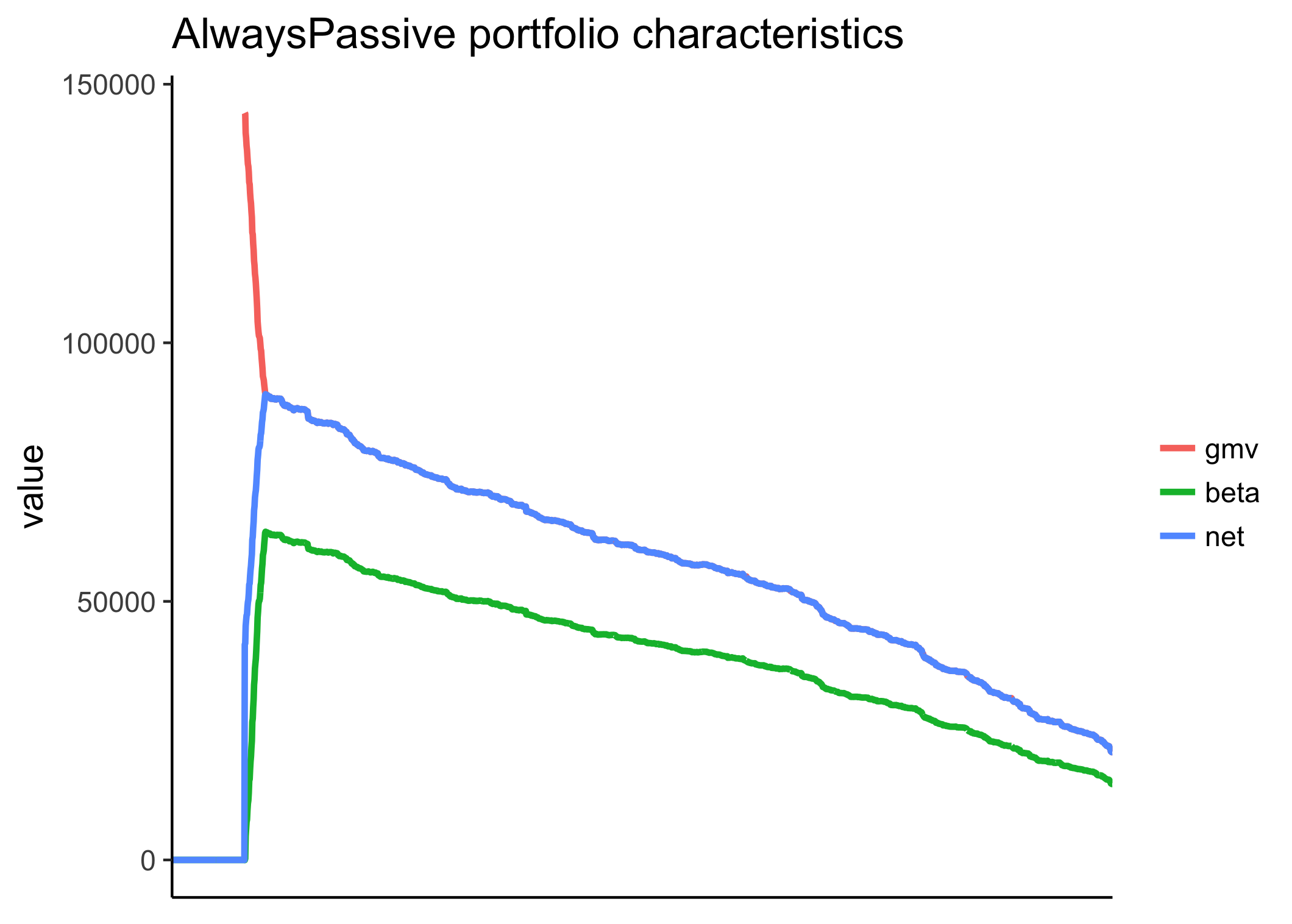}
	\caption{Portfolio holdings in the ``always passive'' model, and 
		portfolio characteristics: gross market value (gmv), net, 
		and $\beta_t$ given by \eqref{eq:beta-t}. }
	\label{fig:AlwaysPassivePort} 
\end{figure}
We now show the analogous graphs for the simplest version of our execution model developed in the previous section. Note that the model retains a fairly small beta exposure throughout the lifetime of the execution. This is because CAPM beta is also a factor in the APT risk model, and the generalized momenta point along the gradient of the Hamilton-Jacobi-Bellman value function and hence drive trading towards the optimal value of 
multiperiod utility (including the risk term). This is the key advantage of our model over simpler execution algorithms. 
\vskip 0.05in
\begin{figure}[H]
	\includegraphics[width=0.45\textwidth,height=0.25\textwidth]{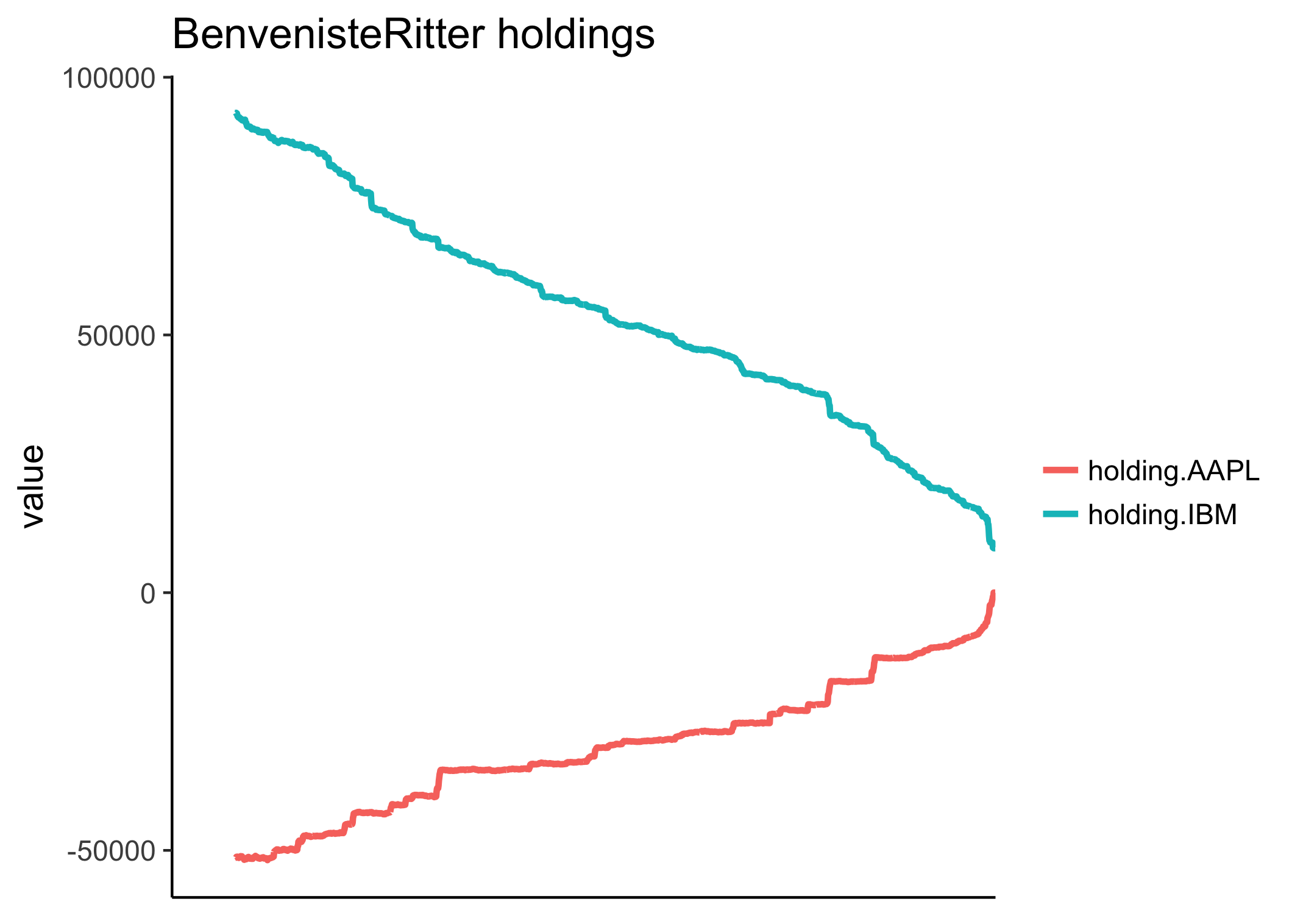}\qquad\includegraphics[width=0.45\textwidth,height=0.25\textwidth]{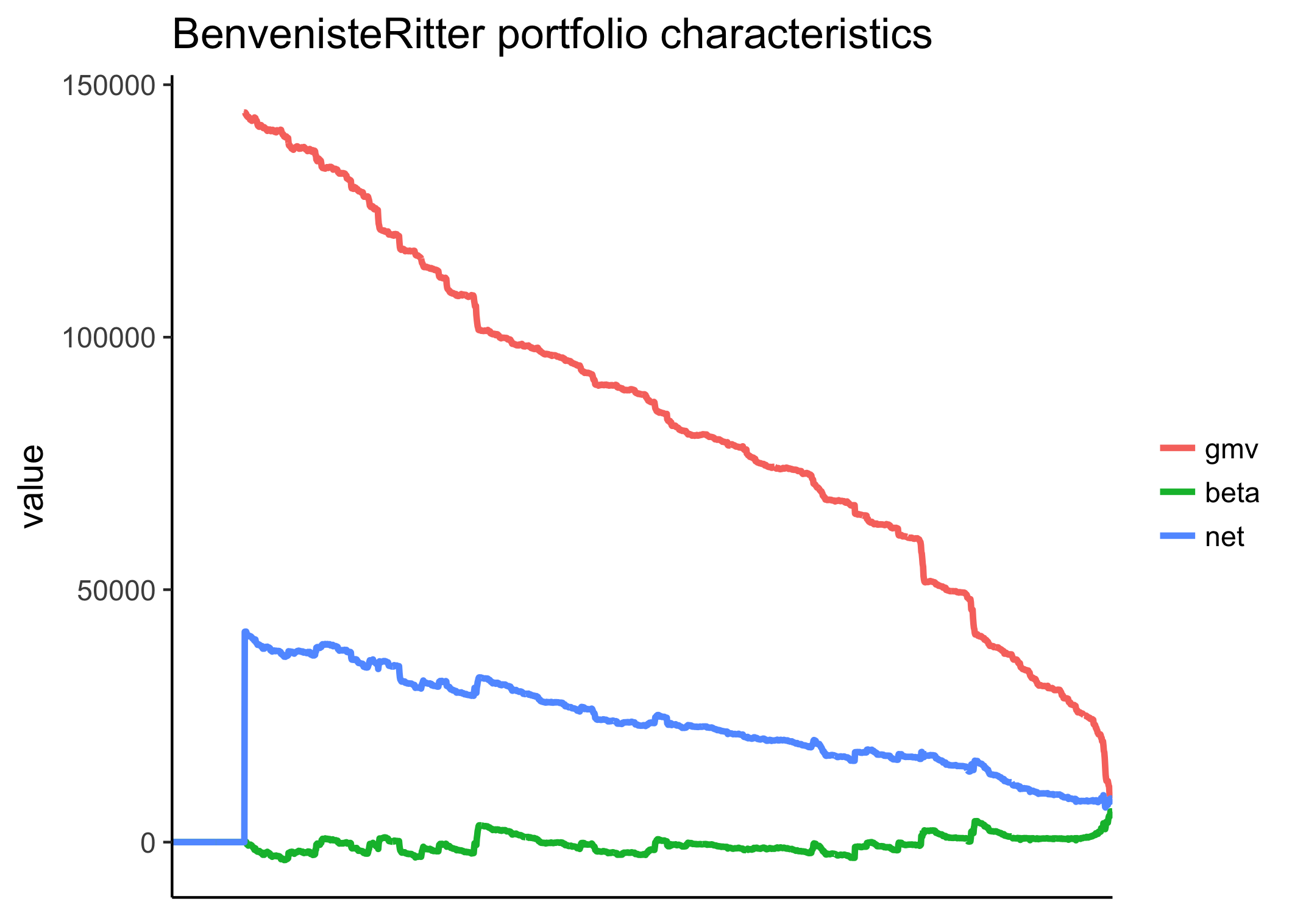}
	\caption{Portfolio holdings in the sophisticated model, and 
		portfolio characteristics: gross market value ($\mathrm{gmv}_t$), 
		$\mathrm{net}_t$, and $\beta_t$ given by \eqref{eq:beta-t}. }
	\label{fig:BenvenisteRitterPort} 
\end{figure}
Finally, we consider the portfolio value over the lifetime of the execution. Note that in our model, the value process \eqref{eq:value} is approximately driftless, which as explained above is a desirable property, and outperforms the ``always passive'' value process realization. In particular, in our model $\text{value}_t$ is able to avoid negative drift in a falling
market precisely because the portfolio remains approximately beta-neutral. In a portfolio with many assets (large $n$), our method would allow it to remain approximately neutral to all factors in the APT model. 
\vskip 0.05in
\begin{figure}[H]
	\centering\includegraphics[width=0.8\textwidth,height=0.35\textwidth]{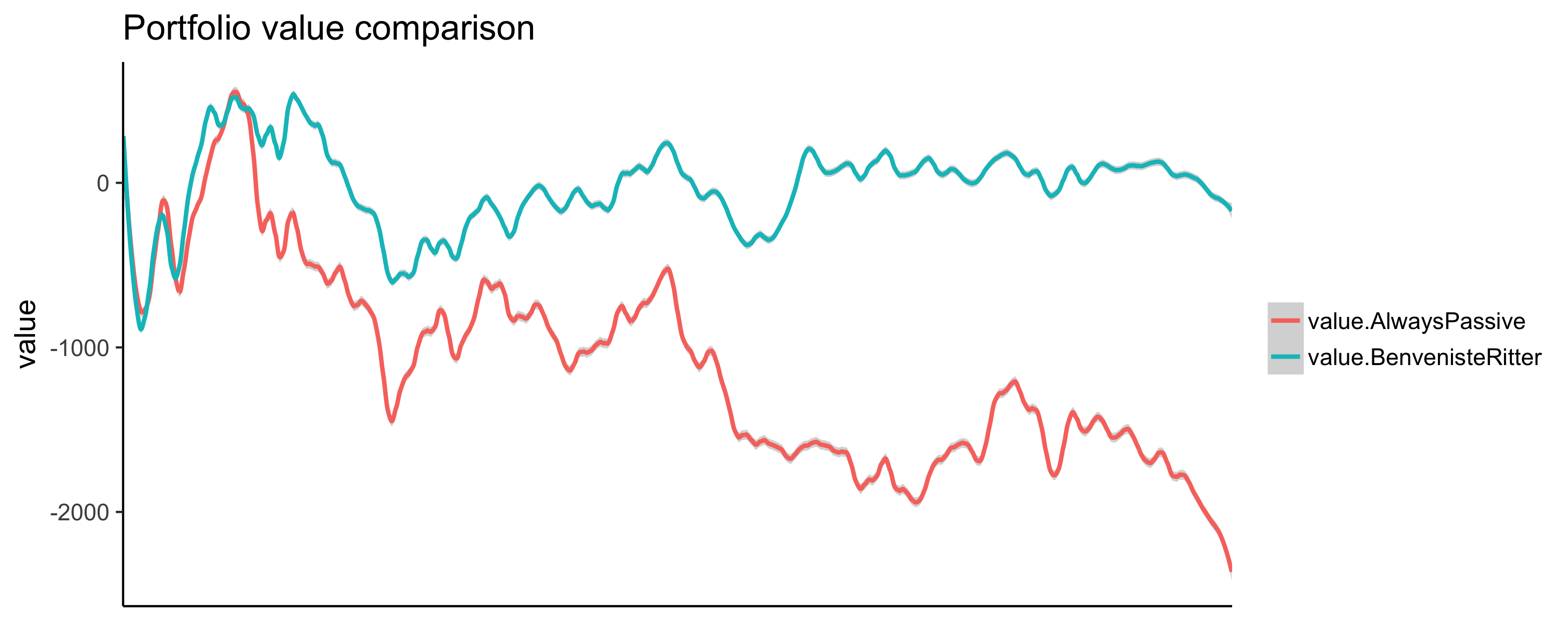}
	\caption{Portfolio value \eqref{eq:value} over the lifetime of the execution, for
		both execution methods.}
	\label{fig:ValueComparison} 
\end{figure}
\vskip 0.05in
The difference in Figure \ref{fig:ValueComparison}  is both statistically and economically significant. The t-statistic for the difference is about 78, hence significant at the 99.999\% level. Moreover, the dollar value of the difference between the two methods is about 1.5\% of the initial gross market value to be liquidated.

\section{Conclusion}

In this paper, we present a framework to perform optimal trading, taking into account market microstructure and a long-term trading schedule without the use of optimal control. This approach relies on the use of the 
generalized momenta $p = \nabla V(t,q)$ as the effective microstructure alpha. We show that a myopic agent sending only market orders with such alpha will minimize the error with respect to the long-term trading schedule. Moreover, when we add the possibility of passive execution, the long-term alpha can be chosen as a transformation of the generalized momenta $p$. We also present a general microstructure trading framework for the multi-asset multi-venue optimal trading problem. For a parsimonious model of fill probabilities, the effective microstructure alpha can be computed in closed form. We apply the same heuristics to derive an optimal market-making model that is tractable for a large number of assets and venues.  \medskip

Based on the dual formulation of the classic Almgren-Chriss optimization problem, this simple heuristics has wide-ranging practical implications. In addition to bridging the gap between order placement and scheduling, it simplifies optimal trading problems that are usually intractable using optimal control due to the high-dimensional Hamilton-Jacobi-Bellman equation resulting from the control problem. This is of particular importance for a quantitative execution desk wishing to trade a high number of cross-listed assets. It opens up many avenues for future exploration. One set of projects is to consider trading problems beyond the typical buy-side utility-maximization, which can still be viewed within the unifying framework of a myopic risk-neutral wealth-maximizer, whose microstructure alphas are aligned with the value function gradient.

\bibliography{ejb-gr.bib}

\end{document}